\documentclass[journal]{IEEEtran}
\usepackage[utf8]{inputenc}
\usepackage[pdftex]{graphicx}
\usepackage{epstopdf}
\usepackage[utf8]{inputenc} 
\usepackage[T1]{fontenc}    
\usepackage{hyperref}       
\usepackage{url}            
\usepackage{booktabs}       
\usepackage{amsfonts}       
\usepackage{nicefrac}       
\usepackage{microtype}      
\usepackage{amsmath,graphicx}
\usepackage{amsbsy}
\usepackage{epsfig}
\usepackage{float}
\usepackage{graphicx}
\usepackage{color}
\usepackage{url}
\usepackage{cite}
\usepackage{amssymb}
\usepackage{pgf, tikz, pgfplots, epstopdf}
\usepackage{graphicx}
\usepackage{lipsum}
\usepackage{float}
\usepackage{cuted}
\usepackage{balance}
\usepackage{booktabs}
\usepackage{verbatim}
\usepackage{hyperref}
\usepackage{pgfplots}
\usepackage{calc}  
\usepackage{mathtools}
\usepackage{enumitem}  
\usepackage{multirow}
\newfloat{algorithm}{tbp}{loa}
\providecommand{\algorithmname}{Algorithm}
\usepackage{algorithm}
\usepackage[noend]{algpseudocode}
\usepackage{subcaption}
\input{mysymbol.sty}
\usepackage{stackengine}
\newcommand\xrowht[2][0]{\addstackgap[.5\dimexpr#2\relax]{\vphantom{#1}}}
\usepackage{enumitem}

\makeatletter
\newcommand*\dotp{\mathpalette\dotp@{.5}}
\newcommand*\dotp@[2]{\mathbin{\vcenter{\hbox{\scalebox{#2}{$\m@th#1\bullet$}}}}}
\newcommand\svdeq{\stackrel{\mathclap{\scriptsize\mbox{ SVD }}}{=}}

\makeatother
\def\x{{\mathbf x}}
\def\f{{\mathbf f}}
\def\g{{\mathbf g}}

\def\F{{\mathbf F}}
\def\U{{\mathbf U}}
\def\G{{\mathbf G}}
\def\D{{\mathbf D}}
\def\S{{\mathbf S}}
\def\B{{\mathbf B}}
\def\R{{\mathbf R}}
\def\h{{\mathbf h}}

\def\tmx{{\mathcal{T}_x\mathcal{M}}}
\def\tmxrp{{\mathcal{T}_x\mathbb{R}^p}}

\def\tm{{\mathcal{T}\mathcal{M}}}
\def\M{{\mathcal{M}}}
\def\i{\iota}
\def\rp{{\mathbb{R}^p}}
\makeatletter
\newcommand{\inc}{%
  \mathrel{\mathpalette\inc@\relax}%
}
\newcommand{\inc@}[2]{%
  \sbox\z@{$#1\lhd$}%
  \sbox\tw@{$#1\leqslant$}%
  \dimen@=\ht\tw@
  \advance\dimen@-\ht\z@
  \ifx#1\displaystyle
    \advance\dimen@ .2pt
  \else
    \ifx#1\textstyle
      \advance\dimen@ .2pt
    \fi
  \fi
  \ooalign{\raisebox{\dimen@}{$\m@th#1\lhd$}\cr$\m@th#1\leqslant$\cr}%
}
\makeatother
\makeatletter
\newcommand{\coinc}{%
  \mathrel{\mathpalette\coinc@\relax}%
}
\newcommand{\coinc@}[2]{%
  \sbox\z@{$#1\rhd$}%
  \sbox\tw@{$#1\geqslant$}%
  \dimen@=\ht\tw@
  \advance\dimen@-\ht\z@
  \ifx#1\displaystyle
    \advance\dimen@ .2pt
  \else
    \ifx#1\textstyle
      \advance\dimen@ .2pt
    \fi
  \fi
  \ooalign{\raisebox{\dimen@}{$\m@th#1\rhd$}\cr$\m@th#1\geqslant$\cr}%
}
\makeatother

\def\ltm{{\Gamma(\tm)}}
\def\ltmn{{\Gamma(\tm_n)}}

\def \bphi{{\boldsymbol{\phi}}}
\def \Phii{{\boldsymbol{\phi}_i}}
\def \th{{\widetilde{h}}}
\def \bPsi{{\boldsymbol{\Psi}}}
\def\Oi{{\mathbf{O}_i}}
\def\Oj{{\mathbf{O}_j}}
\def\Oij{{\mathbf{O}_{i,j}}}
\def\hd{{\hat{d}}}
\def \samp{\boldsymbol{\Omega}}

\def \eigGammai{\widetilde{\boldsymbol{\phi}}^n_{i}}
\def \eivGammai{\widetilde{\lambda}^n_{i}}
\floatstyle{ruled}
\newfloat{algorithm}{tbp}{loa}
\providecommand{\algorithmname}{Algorithm}
\floatname{algorithm}{\protect\algorithmname}


\usepackage{amsthm}
\newtheoremstyle{claudio}
  {-0.1em}
  {-0.1em}
  {}
  {}
  {\itshape\bfseries}
  {.}
  { }
  {\thmname{#1}\thmnumber{ #2 }\thmnote{(#3)}}

\theoremstyle{claudio}
\newtheorem{theorem}{Theorem}
\newtheorem{definition}{Definition}
\newtheorem{proposition}{Proposition}
\newtheorem{remark}{Remark}
\setlength{\textfloatsep}{0.1cm}
\title{Tangent Bundle Convolutional Learning: \\ from Manifolds to Cellular Sheaves and Back}
%
\author{Claudio Battiloro \quad  Zhiyang Wang \quad Hans Riess \quad  Paolo Di Lorenzo \quad Alejandro Ribeiro
\thanks{CB is with the Biostatistics Department, Harvard University, Boston, USA, and with the ESE Department, University of Pennsylvania, Philadelphia, USA, email: cbattiloro@hsph.harvard.edu. ZW and AR are with the ESE Department, University of Pennsylvania, Philadelphia, USA, email: zhiyangw@seas.upenn.edu. HR is with the ECE Department, Duke University, Durham, USA. PDL is with the DIET Department, Sapienza University of Rome, Rome, Italy.  This work was funded by NSF CCF 1934960.  Preliminary results presented in \cite{battiloro2022tnnicassp}.
}
}
\markboth{IEEE TRANSACTIONS ON SIGNAL PROCESSING (ACCEPTED) }{}
\begin{document}
%
\maketitle
\begin{abstract}
\black{In this work we introduce a convolution operation over the tangent bundle of Riemann manifolds in terms of exponentials of the Connection Laplacian operator.} We define tangent bundle filters and tangent bundle neural networks (TNNs) based on this convolution operation, which are novel continuous architectures operating on tangent bundle signals, i.e. vector fields over the manifolds. Tangent bundle filters admit a spectral representation that generalizes the ones of scalar manifold filters, graph filters and standard convolutional filters in continuous time. We then introduce a discretization procedure, both in the space and time domains, to make TNNs implementable, showing that their discrete counterpart is a novel principled variant of the very recently introduced sheaf neural networks. We formally prove that this discretized architecture converges to the underlying continuous TNN. \textcolor{black}{Finally, we numerically evaluate the effectiveness of the proposed architecture on various learning tasks, both on synthetic and real data, comparing it against other state-of-the-art and benchmark architectures.}
\end{abstract}
\begin{IEEEkeywords}
Tangent Bundle Signal Processing, Tangent Bundle Neural Networks, Cellular Sheaves, Sheaf Neural Networks, Graph Signal Processing
\end{IEEEkeywords}
\vspace{-.5cm}
\section{Introduction}\label{sec:intro}
During the last few years, the development of deep learning techniques has led to state-of-the-art results in various fields. More and more sophisticated architectures have promoted significant improvements from both theoretical and practical perspectives. Although it is not the only reason, the success of deep learning is \textcolor{black}{in part} due to Convolutional Neural Networks (CNNs) \cite{lecun1998gradient}. CNNs have achieved excellent performances in a wide range of applications, spanning from image recognition \cite{alexnet2012} to speech analysis \cite{hamid2012cnnspeech} while, at the same time, \textcolor{black}{lightening the computational load of feedforward fully-connected neural networks and integrating features in different spatial resolutions with pooling operators}. CNNs are based on shift operators in the space domain that induce desirable properties in the convolutional filters, among which the most relevant one is the property of shift equivariance. CNNs naturally leverage the regular (often metric) structure of the signals they process, such as spatial or temporal structure. However, data defined on irregular (non-Euclidean) domains are pervasive, with applications ranging from detection and recommendation in social networks \cite{aggarwal2020machine}, to resource allocations over wireless networks \cite{wang2022learning}, and point clouds for shape segmentation \cite{xie2020linking}, just to name a few. Structured data is modeled via the more varied mathematical objects, among which graphs and manifolds are notable examples.  For this reason, the notions of shifts in CNNs have been adapted to convolutional architectures on graphs (GNNs) \cite{gama2018convolutional,scarselli2008graph} as well as a plethora of other structures,~e.g.~simplicial complexes \cite{battiloro2022san,bodnar2021weisfeiler,barbarossa2020topological}, cell complexes \cite{battiloro2022can,bodnarcwnet},  \textcolor{black}{homogeneous spaces \cite{cohen2019general}}, \textcolor{black}{order lattices \cite{riessmultidimensional}}, \textcolor{black}{and manifolds \cite{wang2021stability, cohen2019gauge, schonsheck2018parallel}}. In \cite{parada2020algebraic}, a framework for algebraic neural networks has been proposed exploiting commutative algebras. \textcolor{black}{However, none of these studies consider convolutional filtering of vector fields over manifolds.} Therefore, in this work we focus on tangent bundles, \textcolor{black}{manifolds constructed from the tangent spaces of a domain manifold. Tangent bundles are a specialization of vector bundles which are a specialization of sheaves, all three of which, in increasing levels of generality, mathematically characterize both (1) when local data extends globally and (2) topological obstructions thereof. Our present focus is on tangent bundles as they are a tool for describing and processing vector fields, ubiquitous data structures critical} in tasks such as robot navigation and flocking modeling, as well as in climate science \cite{bermejo2009climate} and astrophysics \cite{collier2018magnet}. Moreover, to make the proposed procedures implementable, we formally describe and leverage the link between tangent bundles and orthogonal cellular sheaves (also called discrete vector bundles), a mathematical structure that generalizes connection graphs and matrix-weighted graphs.
\vspace{-.1cm}
\subsection{Related Works}  
\textcolor{black}{The well-known manifold hypothesis \textcolor{black}{\cite{bronstein2017geometric}} states that high-dimensional data examples are sampled from one (or more) low-dimensional (Riemann) manifolds.} This assumption is the fundamental block of manifold learning, a class of methods for non-linear dimensionality reduction. \textcolor{black}{The Laplacian Eigenmap framework is based on the approximation of manifolds by weighted undirected graphs constructed with $k$-nearest neighbors or proximity radius heuristics}, with the key assumption being that a set of sampled points of the manifold is available \cite{belkin2008towards,chung1997spectral,dunson2021spectral}. \textcolor{black}{Formal connections between GNNs and Manifold Neural Networks (MNNs) are established in \cite{wang2022convolution,levie2021transferability}.}  Most of the previous works focused on scalar signals, e.g. one or more scalar values attached to each node of graphs or point of manifolds; however, recent developments \cite{Sharp2019vhm,hansen2019sheafsp,hansen2020opinion, bodnar2022sheafdiff} showed that processing vector data defined on tangent bundles of manifolds or discrete vector bundles comes with a series of benefits. The work in \cite{Sharp2019vhm} introduced a method for computing parallel transport of vector-valued data on a curved manifold by extending a vector field defined over any region to the rest of the manifold via \textcolor{black}{geodesic curves}. The work in \cite{collier2018magnet} presented an algorithm to reconstruct the magnetopause surfaces from tangent vector observations. \textcolor{black}{Pioneering works on sheaf theory can be found in \cite{leray1998selected, serre1955faisceaux, grothendieck1955general}.} \textcolor{black}{Discrete versions of sheaves, called cellular sheaves, were first introduced in \cite{shepard1985cellular} and were later rediscovered in \cite{curry2014sheaves}. In \cite{shepard1985cellular,curry2014sheaves}, these sheaves were first defined over regular cell complexes, hence the term ``cellular'' sheaves. Often, as in this work, cellular sheaves are defined over tamer objects, here graphs.} \textcolor{black}{In \cite{hansen2019sheafsp}, the authors studied the problem of learning cellular sheaves from (assumed) smooth graph signals.} \textcolor{black}{The work in \cite{hansen2020opinion,Hansen2019towardspecsheaf,ghrist2022cellular,riess2022diffusion} introduced a novel class of diffusion dynamics on cellular sheaves as a model for network dynamics.} In \cite{bodnar2022sheafdiff,hansen2020sheafnn,barbero2022sheafnnconn}, neural networks operating on discrete vector bundles are presented, generalizing GNNs: additionally, the work in \cite{bodnar2022sheafdiff} exploited cellular sheaf theory to show that the underlying geometry of the graph \textcolor{black}{gives rise to} oversmoothing behavior of GNNs. Finally, the most important works for us are  \cite{singer2012vdm, singer2017spectral}. In particular, in \cite{singer2012vdm}, \black{the authors introduced an algorithmic generalization of non-linear dimensionality reduction methods based on the Connection Laplacian operator and proved that both manifolds and their tangent bundles can be approximated with certain cellular sheaves constructed from sampled points of the manifolds.} The work in \cite{singer2017spectral} further generalized the result of \cite{singer2012vdm} by presenting a framework for approximating  Connection Laplacians over manifolds via their principal bundle structure, and by proving \black{the spectral convergence of the approximating sheaf Laplacians.}
\vspace{-.5cm}
\subsection{Contributions.} In this work, we first define a \textit{convolution operation over the tangent bundle} of Riemann manifolds via the Connection Laplacian operator. Our definition is derived from the vector diffusion equation over manifolds, \textcolor{black}{and generalizes convolutions on manifolds \cite{wang2022convolution}, graphs \cite{shuman2013emerging, gama2018convolutional}, as well as standard time convolutions}. Leveraging this operation, we introduce \textit{Tangent Bundle Convolutional Filters} to process tangent bundle signals \textcolor{black}{(vector fields)}. We define the \textit{frequency representation} of tangent bundle signals and the \textit{frequency response} of tangent bundle filters using the spectral properties of the Connection Laplacian. By cascading layers consisting of tangent bundle filter banks and pointwise non-linearities, we introduce \textit{Tangent Bundle Neural Networks} (TNNs). \textcolor{black}{The proposed convolutional processing framework can be also seen as a novel instantiation of the general theory of algebraic signal processing  \cite{parada2020algebraic,puschel2008algebraic}.} However, tangent bundle filters and tangent bundle neural networks are continuous architectures that cannot be directly implemented in practice. \textcolor{black}{For this reason, we provide a principled way of discretizing them, both in time and space domains, making convolutions on them computable}. In particular, we discretize the TNNs in the space domain by sampling points on the manifold and building a cellular sheaf \cite{Hansen2019towardspecsheaf} that represents a \textcolor{black}{legitimate} approximation of both the manifold and its tangent bundle \cite{singer2012vdm}. We \textit{prove that the space discretized architecture over the cellular sheaf converges to the underlying TNN} as the number of sampled points increases. Moreover, we further discretize the architecture in the time domain by sampling the filter impulse function in discrete and finite time steps, notably showing that space-time discretized TNNs (DD-TNNs) are a novel principled variant of the very recently introduced Sheaf Neural Networks \cite{bodnar2022sheafdiff,hansen2020sheafnn,barbero2022sheafnnconn}, and thus shedding further light, from a theoretical point of view, on the deep connection between algebraic topology and differential geometry. Finally, we evaluate the performance of TNNs on both synthetic and real data; in particular, we design a denoising task of a synthetic tangent vector field on the torus, \textcolor{black}{a manifold classification task}, a reconstruction task, and a forecasting task of the daily Earth wind field, tackled via a recurrent version of our architecture. We empirically demonstrate the advantage of incorporating the tangent bundle structure into our model by comparing TNNs against Manifold Neural Networks from \cite{wang2022convolution} \textcolor{black}{(architectures taking into account the manifold structure, but not the tangent spaces),} \textcolor{black}{Multi-Layer Perceptrons \cite{haykin1994neural}, and Recurrent Neural Networks (the latter two do not consider any geometric information).}
\vspace{-.1cm}
\subsection{Paper Outline} The paper is organized as follows. We introduce some preliminary concepts in Section \ref{subsec:prelim_def}.  We define tangent bundle convolution, filters and neural networks in Section \ref{sec:filters}. In Section \ref{sec:disc}, we illustrate the proposed discretization procedure for TNNs and we prove the convergence result. \textcolor{black}{We discuss the consistency of the proposed convolution in Section \ref{sec:consistency}}.Numerical results are in Section \ref{sec:num_res}, and conclusions in Section \ref{sec:conlcusion}. 
\begin{table}[h]
    \centering
    \textcolor{black}{\scalebox{.75}{\begin{tabular}{|c|c|}
    \hline
        Manifold & $\mathcal{M}$\\
        \hline\xrowht[()]{4pt}
        Tangent Space at point $x$ & $\tmx$ \\
        \hline\xrowht[()]{4pt}
   Tangent Bundle & $\tm$ \\
        \hline\xrowht[()]{4pt}
        Tangent Bundle Signal & $\F: \M \rightarrow \tm$ \\
        \hline\xrowht[()]{4pt}
    Differential & $d\i:\tmx\rightarrow\tmxrp$\\
        \hline\xrowht[()]{4pt}
    Riemann Metric & $\langle~,~\rangle_{\tmx} : \tmx \times \tmx \rightarrow \mathbb{R}$
    \tabularnewline
    \hline
    \end{tabular}}}
    \caption{\textcolor{black}{Notation}}
    \label{not_Table}
\end{table}
\vspace{-.5cm}
\section{Preliminary Definitions}\label{subsec:prelim_def}
In this section, we review some concepts from Riemann geometry that will be useful to introduce the convolution operation over tangent bundles.
\begin{figure}[t]
    \centering
    \includegraphics[scale = .34]{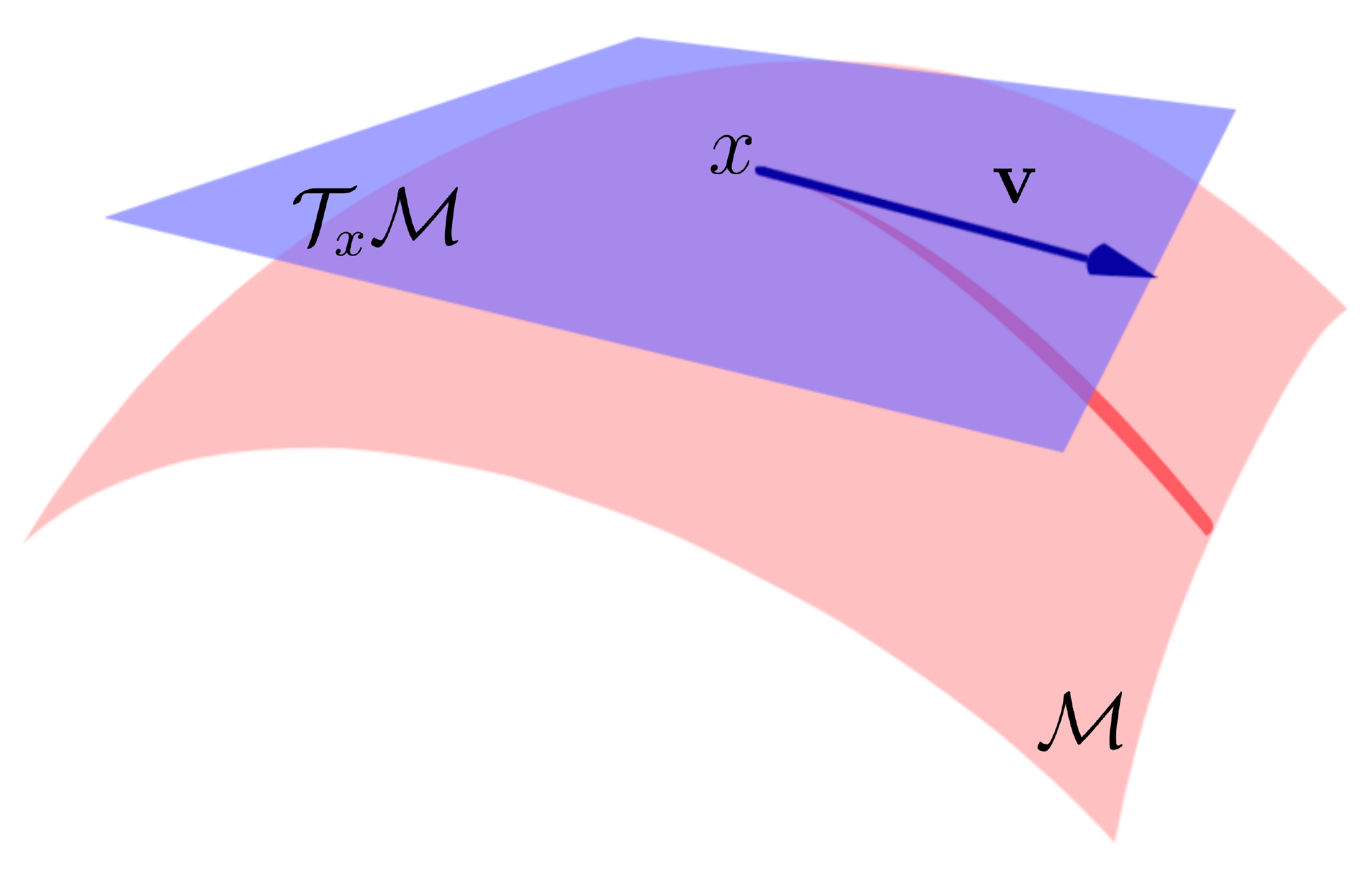}
    \caption{An example of tangent vector}
    \label{fig:tangent_space}
\end{figure}
\vspace{-.1cm}
\subsection{Manifolds and Tangent Bundles} 
We consider a compact, smooth, and \textcolor{black}{orientable} $d$-dimensional manifold $\mathcal{M}$ \textcolor{black}{smoothly} embedded in $\mathbb{R}^p$. Each point $x \in \mathcal{M}$ is endowed with a $d$-dimensional tangent  space \textcolor{black}{$\tmx$ isomorphic to  $\mathbb{R}^d$}, whose elements $\mathbf{v} \in \tmx$ are said to be tangent vectors at $x$. \textcolor{black}{For explicit construction of tangent spaces on a manifold, consult an introductory textbook on differential topology \cite{lee2006Riemannian}. Informally, tangent vectors can be seen as a generalization of the velocity vector of a curve constrained to $\mathcal{M}$ passing through the point $x$. An example of a tangent vector is depicted in Fig. \ref{fig:tangent_space}}.
\textcolor{black}{
\begin{definition}[Tangent Bundle]
    The tangent bundle is the disjoint union of the tangent spaces $\tm = \bigsqcup_{x \in \M} \tmx$ together with the projection map $\pi: \tm \to \M$ given by $\pi(x,\mathbf{v}) = x$.
\end{definition}}
\textcolor{black}{Moreover, the tangent bundle has a natural topology which makes it a smooth $2d$-manifold and makes $\pi$ a smooth map \cite{lee00manif}.} In abuse of language, we often refer to the tangent bundle as simply the space $\tm$. The embedding induces a Riemann structure on $\mathcal{M}$ which allows to equip each tangent space $\tmx$ with an inner product.
\textcolor{black}{\begin{definition}[Riemann Metric]
    A Riemann Metric on a compact and smooth $d$-dimensional manifold $\mathcal{M}$ embedded in $\mathbb{R}^p$ is a (smoothly chosen) inner product $\langle~,~\rangle_{\tmx} : \tmx \times \tmx \rightarrow \mathbb{R}$ on each of the tangent
spaces $\tmx$ of $\mathcal{M}$ given, for each $\mathbf{v}$,$\mathbf{w} \in \tmx$, by
\begin{equation}
\label{riemann_metric}
\langle\mathbf{v},\mathbf{w}\rangle_{\tmx} = \langle d\i \mathbf{v},  d\i\mathbf{w} \rangle_{\mathbb{R}^p},
\end{equation}
where $d\i\mathbf{v} \in \tmxrp$ is \textcolor{black}{called the} differential of $\mathbf{v} \in \tmx$ in $\tmxrp \subset \rp$, $\tmxrp$ is the $d$-dimensional subspace of $\rp$ being the embedding of $\tmx$ in $\rp$,  \textcolor{black}{the differential $d\i:\tmx\rightarrow\tmxrp$ is an injective linear mapping} \textcolor{black}{(also referred to as pushforward, as it pushes tangent vectors on $\mathcal{M}$ forward to tangent vectors on $\mathbb{R}^p$)} \cite{lee2006Riemannian}, and $\langle,\rangle_{\mathbb{R}^p}$ is the usual dot product.
\end{definition}}
The  Riemann metric induces also a uniform probability measure $\mu$ over the manifold, simply given by the considered region scaled by the volume of the manifold.
\vspace{-.1cm}
\subsection{Tangent Bundle Signals} 
A tangent bundle signal is a vector field over the manifold, thus a mapping $\F: \M \rightarrow \tm$ that associates to each point of the manifold a vector in the corresponding tangent space. \textcolor{black}{In the theory of vector bundles, a bundle signal is a section.} An example of a (sparse) tangent vector field over the unit $2$-sphere is depicted in Fig. \ref{fig:sphere}\cite{battiloro2022tnnicassp}. 
\textcolor{black}{\begin{remark}
    The choice of employing the terminology "tangent bundle signal" and not the standard "vector fields" or ``sections'' aims to further underline the strong signal processing perspective of this work, and to facilitate the understanding of its generalization properties, as highlighted in Section \ref{sec:consistency}.
\end{remark}}
We can define an inner product for tangent bundle signals in the following way.
\begin{definition}[Tangent Bundle Inner Product]
    Given tangent bundle signals $\F$ and $\mathbf{G}$, their inner product is given by
\begin{align}
\label{inn_prod}
\langle \F, \mathbf{G} \rangle_{\tm}
&= \int_{\M} \langle \F(x), \mathbf{G}(x) \rangle_{\tmx} \textrm{d}\mu(x),
\end{align}
and the induced norm is $||\F||^2_{\tm} = \langle \F, \F \rangle_{\tm}$.
\end{definition}
\textcolor{black}{We denote with $\ltm$ the space of tangent bundle signals. Note that tangent bundle signals have finite energy with respect to $|| \cdot ||_{\tm}$, because they are (continuous) sections of the tangent bundle. Therefore, the length of all the vectors in a vector field is bounded because the image of a continuous function on a compact set is bounded. Hence, integrating a bounded function on (compact) $\mathcal{M}$ is always well-defined.} In the following, we denote $\langle \cdot, \cdot \rangle_{\tm}$ with $\langle \cdot, \cdot \rangle$  when there is no risk of confusion.
\vspace{-.1cm}
\section{Tangent Bundle Convolutional Filters}\label{sec:filters}
\textcolor{black}{Linear filtering operations are historically synonymous (under appropriate assumptions) with convolution.} Time signals are filtered by computing the
continuous-time convolution of the input signal and the filter
impulse response [17]; images  are filtered by computing multidimensional
convolutions [34]; graph signals are filtered by computing
graph convolutions [5]; scalar manifold signals are filtered by computing manifold convolutions \cite{wang2022convolution}. In this paper, we define a tangent bundle filter as the convolution of the filter impulse response $\th$ and the tangent bundle signal $\F$. To do so, we exploit the Connection Laplacian Operator.
\begin{figure}
    \centering
    \includegraphics[scale = .068]{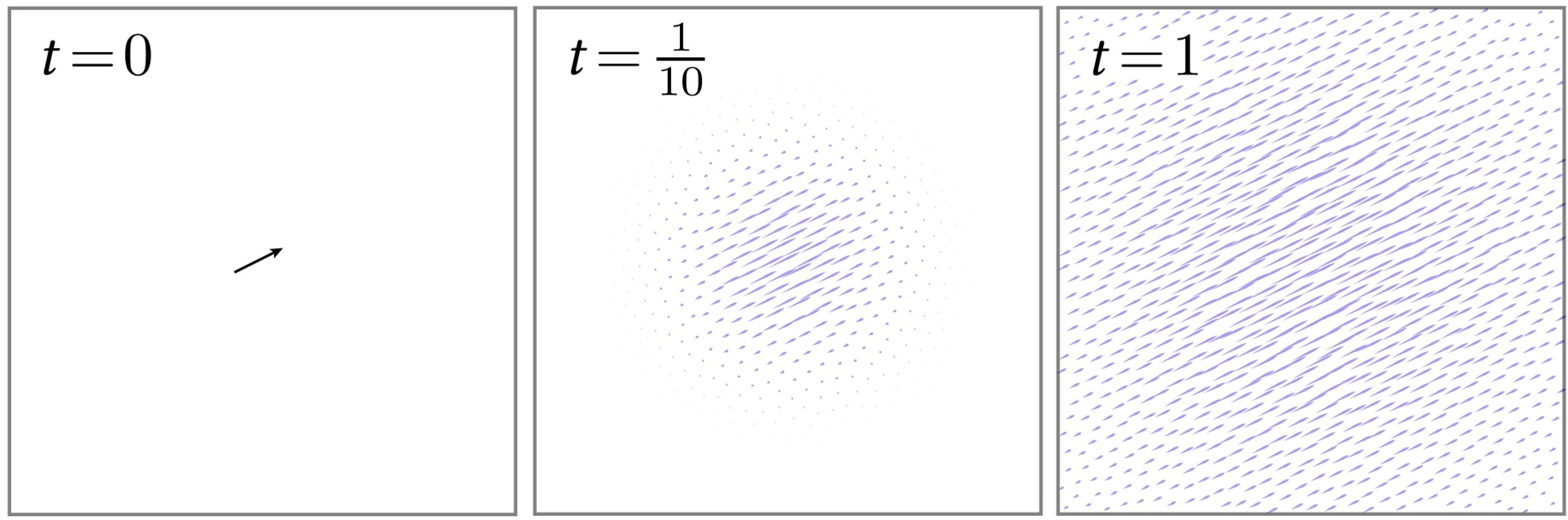}
    \caption{\textcolor{black}{Vector diffusion}}
    \label{fig:vec_diff}
\end{figure}
\vspace{-.1cm}
\subsection{Connection Laplacian}
 The Connection Laplacian is a (second-order) operator $\Delta: \ltm \rightarrow \ltm$, given by the trace of the second covariant derivative defined (for this work) via the Levi-Civita connection \cite{singer2012vdm} \textcolor{black}{(the unique connection compatible with the Riemann metric).}  The Connection Laplacian $\Delta$ has some desirable properties: it is negative semidefinite, self-adjoint, elliptic, and, furthermore, 
 has a negative  spectrum $\{-\lambda_i, \Phii\}_{i=1}^{\infty}$ with eigenvalues $\lambda_i$ and  corresponding eigenvector fields $\Phii$ satisfying
\begin{equation}
\label{eigen}
\Delta\Phii = -\lambda_i\Phii,
\end{equation}
with $0<\lambda_1\leq\lambda_2\leq\cdots$. The only possible accumulation \textcolor{black}{(limit)} point is $-\infty$ \black{\cite{singer2012vdm}} We can use the Connection Laplacian \textcolor{black}{to fathom a heat equation for vector diffusion:}
\begin{equation}
\label{diff_eq}
\frac{\partial \U(x,t)}{\partial t} - \Delta\U(x,t) = 0,
\end{equation}
where $\U: \M \times \mathbb{R}_0^+ \rightarrow \tm$ and $\U( \cdot, t) \in \ltm \, \forall t \in \mathbb{R}_0^+$; we denote the initial condition condition with $\U( x, 0) = \F(x)$. \textcolor{black}{As reported in \cite{Sharp2019vhm} and in Fig. \ref{fig:vec_diff} (obtained from Fig. 4 of \cite{Sharp2019vhm}), an intuitive interpretation of  \eqref{diff_eq} is imagining the evolution of the vector field $\U(x,t)$ over time as a "smearing out" of the initial vector field $\F(x)$}. \textcolor{black}{In this interpretation, the role of the Connection Laplacian can be understood as a means to diffuse vectors from one tangent space to another, because it encodes when tangent vectors are parallel (via the connection), and how to "move" them keeping them parallel (via the induced parallel transport). On scalar functions on Euclidean domains, it agrees with the classical Laplace operator. (Indeed, in the flat case it is sufficient to independently diffuse each scalar component, however, this approach fails for curved space.)} The solution of \eqref{diff_eq} is given by 
\begin{equation}
\label{exp_sol}
\U( x, t) = e^{t\Delta}\F(x),
\end{equation}
which provides a way to construct tangent bundle convolution, as explained in the following section. 
\begin{figure}[t]
    \centering
    \includegraphics[scale = .31]{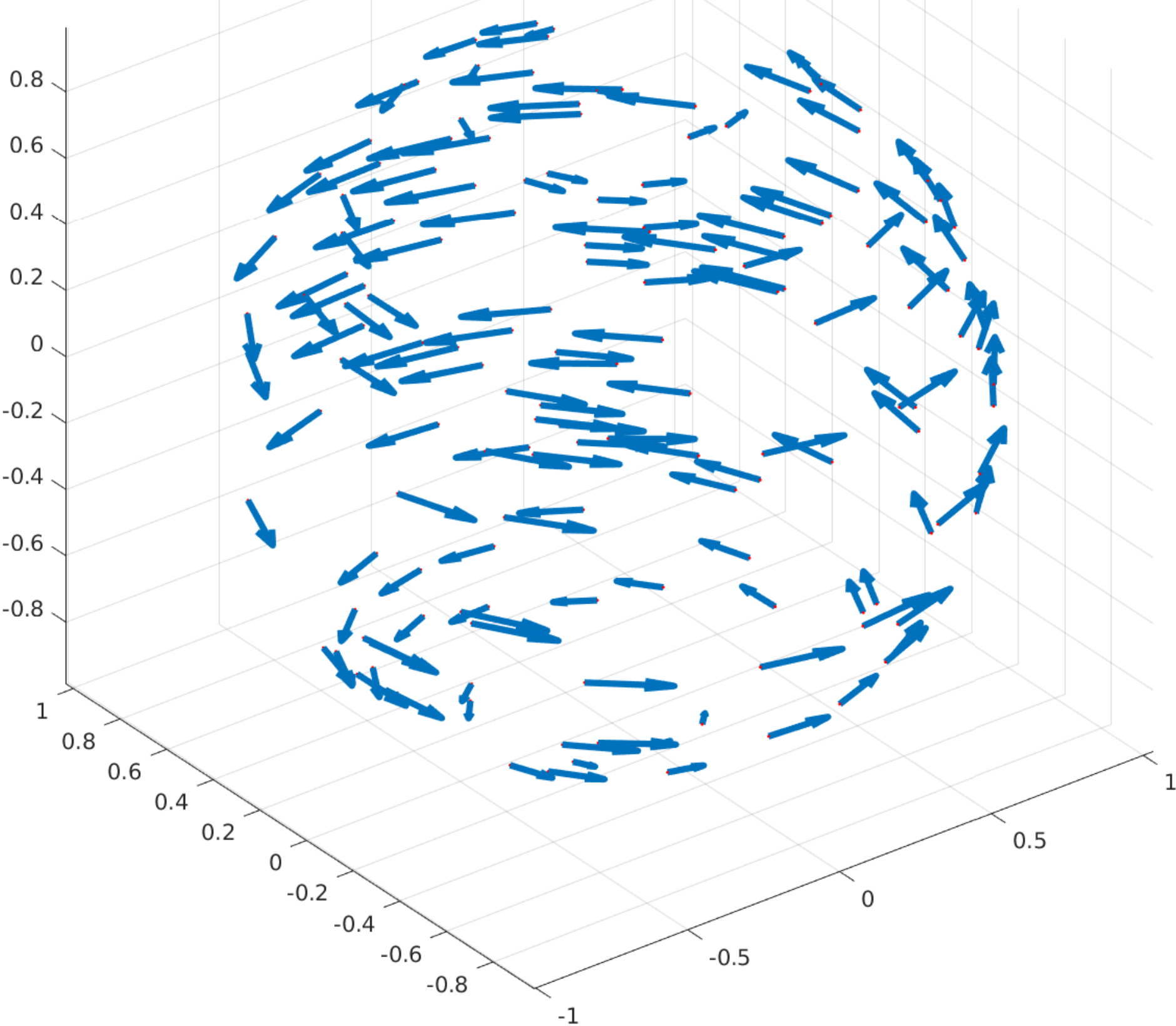}
    \caption{An example of tangent bundle signal}
    \label{fig:sphere}
\end{figure}
\vspace{-.1cm}
\subsection{Tangent Bundle Filters}
We are now in the condition of defining a convolution operation and tangent bundle convolutional filters leveraging the heat diffusion dynamics in \eqref{diff_eq}. 
\textcolor{black}{\begin{definition}[Tangent Bundle Filter] \label{def:tangent-bundle-filter}
Let $\th:\mathbb{R}^+ \rightarrow \mathbb{R}$ and let $\F \in \ltm$ be a tangent bundle signal. The tangent bundle filter with impulse response $\th$, denoted with $\h$, is given by
\begin{equation}
    \label{convolution}
    \G(x) = \big(\th \star_{\tm} \F\big)= \int_0^{\infty}\th(t)\U(x,t)\textrm{d}t,
\end{equation}
where $\star_{\tm}$ is the \textit{tangent bundle convolution}, and $\U(x,t)$ is the solution of the heat equation in \eqref{diff_eq} with $\U(x,0) = \F(x)$.
\end{definition}}
In the following, we will use the terms tangent bundle filter and tangent bundle convolution interchangeably. \textcolor{black}{One cannot explicity compute the output $\G$ directly from the input $\F$ in  Definition \ref{def:tangent-bundle-filter}. However,
this is remedied by injecting the solution of the heat equation \eqref{exp_sol} into
\eqref{convolution}.} In this way, we can derive a closed-form expression for $\h$ that is parametric on the Connection Laplacian, as shown in the following proposition.
\begin{proposition}[Parametric Filter] \label{prop:parametric-filter}
    Any tangent bundle filter $\h$ defined as in \eqref{convolution} is a parametric map $\h(\Delta)$ of the Connection Laplacian operator $\Delta$, given by
\begin{equation}
    \label{param_conv}
    \G(x) = \h\F(x) =  \int_0^{\infty}\th(t)e^{t\Delta}\F(x)\textrm{d}t = \h(\Delta)\F(x).
\end{equation}
\end{proposition}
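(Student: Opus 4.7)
The claim is essentially a substitution identity, so the plan is to simply compose Definition \ref{def:tangent-bundle-filter} with the closed-form solution of the vector heat equation and then recognize the resulting operator as a well-defined functional calculus expression in the Connection Laplacian $\Delta$.

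First I would recall that, by hypothesis, the diffused field $\U(x,t)$ appearing in \eqref{convolution} is the solution of \eqref{diff_eq} with initial condition $\U(x,0) = \F(x)$. Since $\Delta$ is self-adjoint and negative semidefinite on $\ltm$, the heat semigroup $\{e^{t\Delta}\}_{t \geq 0}$ is well-defined, and \eqref{exp_sol} gives the unique solution $\U(x,t) = e^{t\Delta}\F(x)$. Plugging this directly into \eqref{convolution} yields
\begin{equation*}
\G(x) = \int_0^{\infty}\th(t)\,e^{t\Delta}\F(x)\,\textrm{d}t,
\end{equation*}
which is already the middle expression in the proposition.

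Next I would argue that the right-most equality defines a legitimate operator $\h(\Delta)$. Because $\Delta$ admits the spectral resolution $\{-\lambda_i,\Phii\}_{i\geq 1}$ with $\lambda_i \to +\infty$, expanding $\F = \sum_i \langle \F,\Phii\rangle\Phii$ yields $e^{t\Delta}\F = \sum_i e^{-t\lambda_i}\langle \F,\Phii\rangle\Phii$, so interchanging sum and integral (justified by dominated convergence, since $\lambda_i>0$ and $\th$ is assumed integrable enough for \eqref{convolution} to exist) gives
\begin{equation*}
\G = \sum_{i=1}^{\infty}\left(\int_0^{\infty}\th(t)e^{-t\lambda_i}\textrm{d}t\right)\langle \F,\Phii\rangle\Phii.
\end{equation*}
This expression is exactly $\h(\Delta)\F$ under the Borel functional calculus for self-adjoint operators, with scalar symbol $\hat{h}(\lambda) := \int_0^{\infty}\th(t)e^{-t\lambda}\textrm{d}t$ (i.e.\ the Laplace transform of $\th$), evaluated at the eigenvalues of $-\Delta$. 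In particular, the map $\F \mapsto \G$ coincides with the operator $\h(\Delta)$ applied to $\F$, which is the final claim.

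The only subtle point, and really the only place where a reader might pause, is the interchange of the integral over $t$ and the operator $\Delta$ (equivalently, the sum over the spectrum). The hard part, if there is one, is therefore verifying that the assumptions implicit in the definition of \eqref{convolution} — namely that $\G$ exists as an element of $\ltm$ — already force absolute convergence of the above double summation/integral, so that Fubini applies. This is immediate once one observes that $\lambda_i \geq \lambda_1 > 0$ makes $|e^{-t\lambda_i}|\leq e^{-t\lambda_1}$ uniformly in $i$, so the same integrability of $\th$ that ensures $\G$ is well-defined also legitimizes the swap. Everything else is direct substitution, so no further machinery is needed.
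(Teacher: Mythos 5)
Your proof is correct and follows essentially the same route as the paper: Proposition \ref{prop:parametric-filter} is obtained by direct substitution of the heat-equation solution \eqref{exp_sol} into the convolution \eqref{convolution}, which is exactly your first step. The additional spectral/functional-calculus justification you supply is sound (and the uniform domination $|e^{-t\lambda_i}|\leq e^{-t\lambda_1}$ is the right observation), but it essentially anticipates the content of Proposition \ref{prop:frequencey-rep} rather than being needed for this statement.
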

We can make several considerations starting from Proposition \ref{prop:parametric-filter}: we can state that tangent bundle filters are spatial operators,
since they operate directly on points $x \in \mathcal{M}$; moreover,  they are local operators, because they are parametrized by $\Delta$ which is itself a local operator.
\begin{remark}
The exponential term $e^{t\Delta}$ can be seen as a diffusion or shift operator similar to a time delay in a linear time-invariant (LTI) filter \cite{oppenheim1997signals}, or to a graph shift operator in a linear shift-invariant (LSI) graph filter \cite{gama2020gcnmagazine}, or to a manifold shift operator based on the Laplace-Beltrami operator \cite{wang2022convolution}. The resemblance is due to the fact that tangent bundle filters are linear combinations of the elements of the tangent bundle diffusion sequence, such as graph filters are linear combinations of
the elements of the graph diffusion sequence and manifold filters are linear combinations of the elements of the manifold diffusion sequence. These considerations are further useful to validate the consistency of the proposed convolution operation, discussed in detail in Section \ref{sec:consistency}.
\end{remark}
\vspace{-.1cm}
\subsection{Frequency Representation of Tangent Bundles Filters}
The spectral properties of the Connection Laplacian $\Delta$ allow us to introduce the \textcolor{black}{notion of a frequency domain. Following the approach historically common to many signal processing frameworks, we define the frequency representation of a tangent bundle signal $\F \in \Gamma(\tm)$ as its projection onto the eigenbasis of the Connection Laplacian}
\begin{equation}
\label{freq_resp}
    \big[\hat{F}\big]_i = \langle \F, \Phii \rangle = \int_{\M}\langle \F(x), \Phii(x) \rangle_{\tmx} \textrm{d}\mu(x).
\end{equation}
\begin{proposition}[Frequency Representation] \label{prop:frequencey-rep}
    Given a tangent bundle signal $\F$ and a tangent bundle filter $\h(\Delta)$ as in Definition \ref{def:tangent-bundle-filter}, the frequency representation of the filtered signal $\G = \h(\Delta)\F$ is given by
\begin{equation}
    \big[\hat{G}\big]_i = \int_0^{\infty} \th(t)e^{-t\lambda_i}\textrm{d}t\big[\hat{F}\big]_i.
\end{equation}
\end{proposition}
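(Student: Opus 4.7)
The plan is to exploit the spectral decomposition of the Connection Laplacian together with the self-adjointness that was already highlighted in the text. First, I would start from the very definition of the frequency representation in \eqref{freq_resp}, applied to the output signal $\G$, and substitute the parametric form derived in Proposition \ref{prop:parametric-filter}:
\begin{equation*}
\big[\hat{G}\big]_i = \langle \G, \Phii \rangle = \Big\langle \int_0^{\infty}\th(t)\,e^{t\Delta}\F\,\textrm{d}t,\; \Phii \Big\rangle.
\end{equation*}
The first small step is to pull the scalar integral outside of the inner product. Since $\th$ is a scalar impulse response and $e^{t\Delta}\F \in \Gamma(\tm)$ for each $t\ge 0$, linearity (together with a Fubini-type argument justified by the finite energy of bundle signals discussed right after the inner product definition) gives
\begin{equation*}
\big[\hat{G}\big]_i = \int_0^{\infty}\th(t)\,\langle e^{t\Delta}\F,\Phii\rangle\,\textrm{d}t.
\end{equation*}

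Next, I would leverage the two structural facts about $\Delta$ already invoked in the excerpt: self-adjointness and the spectrum $\{-\lambda_i,\Phii\}_{i=1}^{\infty}$. Because $\Delta$ is self-adjoint, so is every power $\Delta^k$, and therefore the operator exponential $e^{t\Delta}=\sum_{k\ge 0} \tfrac{t^k}{k!}\Delta^k$ is also self-adjoint on $\Gamma(\tm)$. Hence I can move $e^{t\Delta}$ to the second slot of the inner product:
\begin{equation*}
\langle e^{t\Delta}\F,\Phii\rangle = \langle \F, e^{t\Delta}\Phii\rangle.
\end{equation*}
Applying $e^{t\Delta}$ term-by-term to the eigenvector field $\Phii$ and invoking the eigenvalue relation \eqref{eigen} inductively, I obtain $\Delta^k \Phii = (-\lambda_i)^k \Phii$, and consequently $e^{t\Delta}\Phii = e^{-t\lambda_i}\Phii$.

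Putting these ingredients together, the inner product becomes a pure scalar:
\begin{equation*}
\langle \F, e^{t\Delta}\Phii\rangle = e^{-t\lambda_i}\langle \F, \Phii\rangle = e^{-t\lambda_i}\,\big[\hat{F}\big]_i,
\end{equation*}
and substituting back produces exactly the claimed identity, with the $t$-independent factor $[\hat{F}]_i$ coming out of the outer integral. The one step that needs a little care, and which I would treat as the main technical obstacle, is the interchange of the time integral and the inner product together with the term-by-term application of the series expansion of $e^{t\Delta}$ to $\Phii$; both operations rest on enough regularity of $\th$ (so that $\int_0^\infty|\th(t)|e^{-t\lambda_i}\mathrm{d}t<\infty$) and on the convergence of the operator exponential on smooth sections, which is guaranteed since $\Phii$ are smooth eigenvector fields of a (negative semidefinite) self-adjoint elliptic operator. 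Everything else is a direct application of linearity, self-adjointness, and the eigenvalue equation already stated.
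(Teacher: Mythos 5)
Your proof is correct and follows essentially the same route as the paper's: substitute the parametric form of the filter into the definition of the frequency representation, exchange the time integral with the inner product by linearity, use self-adjointness of $\Delta$ (hence of $e^{t\Delta}$) to move the exponential onto $\Phii$, and apply the eigenvalue relation $e^{t\Delta}\Phii = e^{-t\lambda_i}\Phii$ to extract the scalar $e^{-t\lambda_i}$. The extra care you take in justifying the integral/inner-product interchange and the term-by-term action of the operator exponential is a welcome refinement but does not change the argument.
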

\begin{proof}
    See Section B of Supplemental Material.
\end{proof}
Therefore, we can characterize the frequency response of a tangent bundle filter in the following way.
\begin{definition}[Frequency Response] \label{def:frequency}
    The frequency response $\hat{h}(\lambda)$ of the filter $\h(\Delta)$ is defined as
\begin{equation}
    \label{frequency_filt}
    \hat{h}(\lambda) = \int_0^{\infty} \th(t)e^{-t\lambda}\textrm{d}t.
\end{equation}
\end{definition}
This leads to $\big[\hat{G}\big]_i = \hat{h}(\lambda_i)\big[\hat{F}\big]_i$, meaning that the tangent bundle filter is point-wise in the frequency domain. We can finally write the frequency representation of the filter as
\begin{equation}
\label{filtered_series}
\G = \h(\Delta)\F = \sum_{i=1}^{\infty} \hat{h}(\lambda_i) \langle \F, \Phii \rangle \Phii.
\end{equation}
{\color{black}\begin{remark}
    The frequency response $\hat{h}(\lambda)$ in Definition \ref{def:frequency} is the Laplace transform of $\th(t)$ if we let $\lambda$ be an arbitrary complex argument. The effect of a tangent bundle filter on a tangent bundle signal in the frequency domain is determined by evaluating $\hat{h}(\lambda)$ at the eigenvalues $\lam_i$ of the Connection Laplacian. This interpretation is analogous to the interpretation of the Fourier transform as an instantiation of the Laplace transform restricted to $\lambda = j\omega$ \cite{oppenheim1997signals}. This analogy can be furthered by observing that $j\omega$ are eigenvalues of the derivative operator (see Section \ref{sec:consistency}). This interpretation is also consistent with the interpretation of the frequency response of manifold filters---also a Laplace transform which is instantiated at the eigenvalues of the Laplace-Beltrami operator \cite{wang2022convolution}---and the frequency response of graph filters -- a $z$-transform which is instantiated at the eigenvalues of the graph shift operator \cite{shuman2013emerging}.
\end{remark}}
\vspace{-.1cm}
\subsection{Lowpass Tangent Bundle Filters}
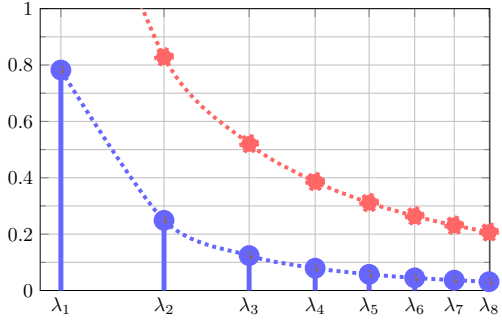
\begin{figure}[t!]
    \centering
    \scalebox{.75}{\begin{tikzpicture}
\begin{axis}[
    xlabel={},
    ylabel={},
    ymin=0, ymax=1, 
    xmin=0.68, xmax=2.2,
    scale only axis,
     xtick = { 0., 0.75,  1.09861229,  1.38629436, 1.60943791, 1.79175947, 1.94591015, 2.07944154, 2.19722458},
     xticklabels = {$\lambda_1$,
     $\lambda_1$,
     $\lambda_2$,
     $\lambda_3$,
     $\lambda_4$,
     $\lambda_5$,
     $\lambda_6$,
     $\lambda_7$,
     $\lambda_8$,},
    samples at={0., 0.75, 1.09861229, 1.38629436, 1.60943791,
       1.79175947, 1.94591015, 2.07944154, 2.19722458},
    grid=both,
    minor tick num=1,
    every axis plot/.append style={thick},
    height = 5cm,
    width = .9\linewidth
]

\addplot+[ycomb, mark=*, blue!60, mark options={fill=blue!60, scale=2}, line width   = 2.5] plot (x, {.33/(x^3});
\addplot+[smooth, mark=*, red!60, mark options={fill=red!60, scale=2}, line width   = 2, dotted] expression {1/(x^2};

\addplot+[smooth, blue!60, line width   = 2, dotted] expression {.33/(x^3};

\end{axis}
\end{tikzpicture}}
    \caption{\textcolor{black}{Illustration of a lowpass, non-amplifying, Lipschitz continuous tangent bundle filter. The $x$-axis stands for the spectrum with each sample representing an eigenvalue. Here the eigenvalues increase at a logarithmic rate. The red dotted line is $\lambda_i^{-2}$ and  the blue dotted line is the filter, obtained with impulse response $\widetilde{h}(t) = t^2/6$, thus $\hhath(\lambda) = \frac{1}{3}\lambda^{-3}$, from \eqref{frequency_filt}.}}
    \label{fig:alpha_fdt}
\end{figure}
The spectrum of the Connection Laplacian $\Delta$ is infinite-dimensional, i.e., there is an infinite (though countable) number of eigenvalues that need to be taken into account. \textcolor{black}{However, we can design lowpass filters to tackle this problem. This design, although not mandatory for practical purposes, is crucial in proving the convergence result of the discretized filters and neural networks to the underlying continuous filters and TNNs, respectively, stated in Theorem \ref{theorem-main}. 
\begin{definition}[Lowpass Tangent Bundle Filters]\label{def:lowpass} A tangent bundle filter $\mathbf{h}(\Delta)$ is a lowpass filter if its frequency response function $\hhath$ is $\mathcal{O}(\lambda_i^{-2})$, i.e. if $\limsup_{i \rightarrow \infty}\hhath(\lambda_i)\lambda_i^2 < \infty$.
\end{definition}}
In other words, lowpass filters asymptotically decay at least as fast as $\lambda_i^2$, thus progressively suppressing high frequencies. Finally, we define Lipshitz continuous and non-amplifying tangent bundle filters.
\begin{definition}[Tangent Bundle Filters with Lipschitz Continuity]
    A tangent bundle filter is $C$-Lispchitz if its frequency response is Lipschitz continuous with constant $C$, i.e if 
     $|\hat{h}(a)-\hat{h}(b)| \leq C |a-b|\text{ for all } a,b\in (0,\infty)\text{.}$
\end{definition}
\begin{definition}[Non-Amplifying Tangent Bundle Filters]
    A tangent bundle filter is non-amplifying if for all $\lambda\in(0,\infty)$, its frequency response $\hat{h}$ satisfies $|\hat{h}(\lambda)|\leq 1$.
\end{definition}
The Lipschitz continuity is a standard assumption, while the non-amplifying assumption is perfectly reasonable, as any (finite-energy) filter function $\hhath(\lambda)$ can be normalized.  An example of a lowpass, non-amplifying, Lipschitz continuous tangent bundle filter is depicted in Fig.~\ref{fig:alpha_fdt}.
\vspace{-.1cm}
\subsection{Tangent Bundle Neural Networks}\label{sec:tnn}

We define a layer of a Tangent Bundle Neural Network (TNN) as a bank of tangent bundle filters followed by a pointwise non-linearity. In this setting,  pointwise informally means ``pointwise in the ambient space''. We introduce the notion of differential-preserving non-linearity to formalize this concept in a consistent way.
\begin{definition}[Differential-preserving Non-Linearity] \label{def:pointwise}
    \textcolor{black}{Denote with $U_x \subset \tmxrp$ the image of the injective differential $d\i$ in $\tmxrp$. A mapping $\sigma:\ltm \rightarrow \ltm$ is a differential-preserving non-linearity if it can be written as $\sigma(\F(x)) = d\i^{-1} \widetilde{\sigma}_x  (d\i  \F(x))$, where $\widetilde{\sigma}_x: U _x\rightarrow U_x$ is a point-wise non-linearity in the usual (Euclidean) sense.}
\end{definition}
Furthermore, we assume that $\widetilde{\sigma}_x = \widetilde{\sigma}$ for all $x \in \M$. 
\begin{definition}[Tangent Bundle Neural Networks]
    The $l$-th layer of a TNN with $F_l$ input signals $\{\F_l^q\}_{q = 1}^{F_l}$, $F_{l+1}$ output signals $\{\F_{l+1}^u\}_{u = 1}^{F_{l+1}}$, and non-linearity $\sigma(\cdot)$ is defined as
\begin{equation}
    \label{tnn_layer}
    \F_{l+1}^u(x) = \sigma\Bigg(\sum_{q=1}^{F_l}\h(\Delta)_l^{u,q}\F_l^q(x)\Bigg), \; u = 1,...,F_{l+1}.
\end{equation}
\end{definition}
Therefore, a  TNN of depth $L$ with input signals $\{\mathbf{F}^q\}_{q=1}^{F_0}$ is built as the stack of $L$ layers defined as in \eqref{tnn_layer}, where $\F_0^q = \F^q$. An additional task-dependent readout layer \textcolor{black}{(e.g~sum for classification) can be appended to the final layer.}

To globally represent the TNN, we collect all the filter impulse responses in a function set $\mathcal{H} = \big\{\widetilde{h}_l^{u,q}\big\}_{l,u,q}$ and we describe the TNN $u$-th output as a mapping $\F^u_{L}=\bPsi_u\big(\mathcal{H}, \Delta, \{\mathbf{F}^q\}_{q=1}^{F_0}\big)$ to emphasize that at TNN is parameterized by both $\mathcal{H}$ and the Connection Laplacian $\Delta$.
\vspace{-.2cm}
\section{Discretization in Space and Time} \label{sec:disc}
\vspace{-.1cm}
Tangent Bundle Filters and Tangent Bundle Neural Networks operate on tangent bundle signals, thus they are continuous architectures that cannot be directly implemented in practice. Here we provide a procedure for discretizing tangent bundle signals, both in time and spatial domains; the discretized counterpart of TNNs \textcolor{black}{ is an instantiation of the recently introduced Sheaf Neural Networks \cite{hansen2020sheafnn}.} For this reason, in this section we first provide a brief review of cellular sheaves over undirected graphs, and then we explain the proposed discretization procedure.
\vspace{-.1cm}
\subsection{Cellular Sheaves}
\textcolor{black}{A cellular sheaf over an (undirected) graph consists of a vector space for each node and edge and a collection of linear transformations indexed by node-edge incidence pairs of the graph. Formally, it is a functor on a partially ordered set of node-edge incidence relations into the category of vector spaces and linear transformations. We introduce the following non-standard notation to emphasize the role that sheaves play in approximating tangent bundles as the number of nodes increases.
\begin{definition}[Cellular Sheaf over a Graph]
    Suppose $\mathcal{M}_n = (\mathcal{V}_n, \mathcal{E}_n)$ is an undirected graph with $n = |\mathcal{V}_n|$ nodes. A cellular sheaf over $\mathcal{M}_n$ is the tuple $\tm_n = (\mathcal{M}_n, \mathcal{F})$, i.e.:
\begin{itemize}
    \item A vector space $\mathcal{F}(v)$ for each $v \in \mathcal{V}_n$. We refer to these vector spaces as node stalks.
    \item A vector space $\mathcal{F}(e)$ for each $e \in \mathcal{E}_n$. We refer to these vector spaces as edge stalks.
    \item A linear mapping $V_{v,e} : \mathcal{F}(v) \rightarrow \mathcal{F}(e)$ represented by a matrix $\mathbf{V}_{v,e}$ for each pair $(v,e) \in \mathcal{V}_n \times \mathcal{E}_n$ with incidence $v~\inc~e$. These mappings are called restriction maps.
\end{itemize}
The space $\mathcal{L}^2(\tm_n) = \bigoplus_{v \in \mathcal{V}} \mathcal{F}(v)$ formed by the direct sum of vector spaces associated with the nodes of the graph is commonly called the space of $0$-cochains, which we refer to as sheaf signals on $\tm_n$. We write a sheaf signal on $\mathcal{M}_n$ as $\mathbf{f}_n \in \mathcal{L}^2(\tm_n)$.
\end{definition}
\begin{definition}[Sheaf Laplacian]
The (non-normalized) Sheaf Laplacian of a sheaf $\tm_n$ is a linear mapping $\Delta_n:\mathcal{L}^2(\tm_n) \rightarrow \mathcal{L}^2(\tm_n)$ defined node-wise
\begin{equation}
    (\Delta_n\mathbf{f}_n)(v) = \sum_{v \inc e \coinc u}\mathbf{V}_{v,e}^T(\mathbf{V}_{v,e}\mathbf{f}_n(v) - \mathbf{V}_{u,e}\mathbf{f}_n(u)).
\end{equation}
\end{definition}
While in general, the dimensions of the stalks may be arbitrary, this work focuses on discrete $\mathcal{O}(d)$-bundles, or orthogonal sheaves. In an orthogonal sheaf, we have $\mathbf{V}_{v,e}^{-1} = \mathbf{V}_{v,e}^{T}$ for all $v \inc e$ and $\mathcal{F}(v) \cong \mathbb{R}^d$ for all $v$.} \textcolor{black}{Note, that this does not mean every stalk is equal, but has the same dimension.} 
\begin{figure}
    \centering
    \includegraphics[scale = .28]{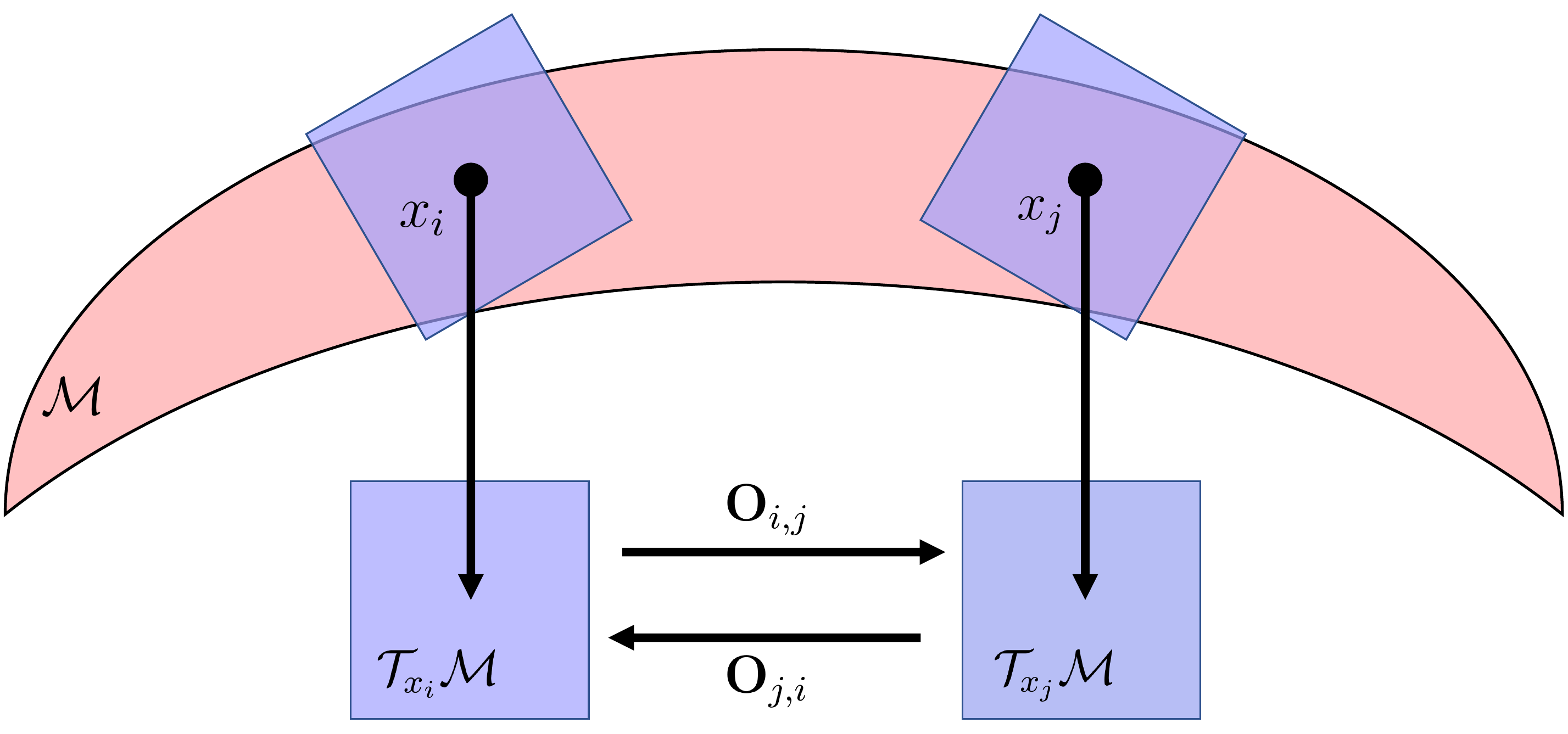} 
    \caption{Pictorial view of discrete parallel transport.}
    \label{fig:transport}
\end{figure}
\vspace{-.2cm}
\subsection{Discretization in the Space Domain}\label{sec:space_disc}
The manifold $\M$, the tangent bundle $\tm$, and the Connection Laplacian $\Delta$ can be approximated from a set of sampled points $\mathcal{X} \subset \mathbb{R}^p$. Knowing the coordinates of the sampled points, we construct an orthogonal cellular sheaf over an undirected geometric graph such that its normalized Sheaf Laplacian converges to the manifold Connection Laplacian as the number of sampled points (nodes) increases \cite{singer2017spectral}. Formally,  we assume that a set of $n$  points $\mathcal{X}=\{x_1,\dots,x_n\}\subset \mathbb{R}^p$ are sampled i.i.d. from measure $\mu$ over $\mathcal{M}$. We build a cellular sheaf $\tm_n$ via the Vector Diffusion Maps procedure whose details are listed in \cite{singer2012vdm} and which we briefly review here.

We start by building a weighted (geometric) graph $\M_n = (\mathcal{V}_n, \mathcal{E}_n$) with nodes $\mathcal{V}_n = \{1,2,\dots, n \}$ and weights $w_{ij}$ for nodes $i$ and $j$ as follows. Set a scale $\epsilon_n>0$. For each pair $i,j \in \mathcal{V}_n \times \mathcal{V}_n$, if $\|x_i - x_j \|_2^2 \leq \epsilon_n$, then $ij \in \mathcal{E}_n$ with weight
\begin{equation}
\label{graph_weights} 
    w_{i,j} = \exp\Bigg(\frac{||x_i-x_j||_{2}}{\sqrt{\epsilon_n}}\Bigg);
\end{equation}
otherwise, $ij \notin \mathcal{E}_n$ and $w_{i,j} = 0$ \textcolor{black}{\cite{singer2012vdm} (Eq. 2.5, page 6)}.
The neighborhood $\mathcal{N}_i$ of each point $x_i$ contains the points $x_j \in \mathcal{X}$ lying in a ball of radius $\sqrt{\epsilon_n}$ centered at $x_i$. Using a local PCA procedure, we assign to each node $i$ an orthogonal transformation $\Oi \in \mathbb{R}^{p\times\hd}$, that is an approximation of a basis of the tangent space $\mathcal{T}_{x_i}\mathcal{M}$, with $\hd$ being an estimate of $d$ obtained from the same procedure (or $d$ itself, if known). In particular, we fix another scale parameter $\epsilon_{\textrm{PCA}}$ (different from the graph kernel scale parameter $\epsilon_n$) and we define the PCA neighborhood $\mathcal{N}^{\textrm{P}}_i$ of each point $x_i$ as the points $x_j \in \mathcal{X}$ lying in a ball of radius $\sqrt{\epsilon_{\textrm{PCA}}}$ centered at $x_i$. We define $\mathbf{X}_i \in \mathbb{R}^{p \times |\mathcal{N}^{\mathrm{P}}_i|}$ for each point to be a matrix whose $j$-th column is the vector $x_j - x_i$, with $x_j \in \mathcal{N}^{\mathrm{P}}_i$; equivalently, it is possible to shift each neighbor by the mean $1/|\mathcal{N}^{\textrm{P}}_i|\sum_{x_j \in \mathcal{N}^{\mathrm{P}}_i} x_j$. 
At this point, we compute for each point a matrix $\mathbf{B}_i = \mathbf{X}_i\mathbf{C}_i$, where $\mathbf{C}_i$ is a diagonal matrix whose entry are defined as $\mathbf{C}(i,i) = \sqrt{K(||x_i -x_j||_{2}/\sqrt{\epsilon_{\textrm{PCA}}})}$, with $K(\cdot)$ being any twice differentiable positive monotone function supported on $[0,1]$ \textcolor{black}{(this scaling is useful to emphasize nearby points over far away points). We now perform the actual Local PCA by computing, per each point, the following covariance matrix and its eigendecomposition
\begin{equation}\label{cov_mat}
\mathbf{R}_i=\B_i^T\B_i=\mathbf{M}_i\Sigma_i\mathbf{M}_i^T.
\end{equation}}
\vspace{-.4cm}
\begin{definition}[Approximated Tangent Space \cite{singer2012vdm} \textcolor{black}{(Eq. 2.1, page 5)}] \label{def:approxtangent}
    For each point $x_i \in \mathcal{X}\subset \mathcal{M}$, the approximated basis $\mathbf{O}_i$ of its tangent space $\mathcal{T}_{x_i}\mathcal{M}$ is given by the $\hd$ largest left eigenvectors of the covariance matrix $\mathbf{R}_i$ from \eqref{cov_mat}, where $\hd$ is an estimate of $\dim(\M)$ or $\dim(\M)$ \textcolor{black}{itself}, if known.
\end{definition}
\textcolor{black}{When the true manifold dimension $d$ is not known, it is possible to estimate it directly from the sampled points. In the ideal case of neighboring points in $\mathcal{N}_i^P$ being located
exactly on $\mathcal{T}_{x_i}\mathcal{M}$, it holds that $\textrm{rank}(\mathbf{X}_i) = \textrm{rank}(\mathbf{B}_i) = d$, therefore  only $d$ singular values are non-vanishing. In this ideal case, $d$ can be obviously estimated as the number of singular values different from zero. However, there may usually be more than $d$ non-vanishing singular values due to
the curvature effect. In this case, it is possible to estimate the dimension $d$ as the number of singular values accounting for a certain (high) percentage of the variability of the displacements in $\mathbf{B}_i$. In practice,  denoting  the singular values of $\mathbf{B}_i$ with $\beta_{i,1} \geq \beta_{i,2} \geq \dots \geq \beta_{i,|\mathcal{N}_i^P|}$, a threshold
$0<\gamma\leq 1$ (possibly close to 1) is chosen and a local dimension $\hd_i$ is estimated per each point $x_i$ as the smallest number of singular values for which $\sum_{j = 1}^{\hd_i}\beta_{i,j}/\sum_{j = 1}^{|\mathcal{N}_i^P|}\beta_{i,j} \geq \gamma$.
For example, setting $\gamma = 0.8$ means that $\hd_i$ singular values account for at least $80\%$
variability of the displacements. At this point, the estimate $\hd$ of the dimension $d$ of the manifold is obtained as the (integer) mean or the median of the local estimated dimensions $\{\hd_i\}_{i=1}^n $\cite{singer2012vdm}}. Definition \ref{def:approxtangent} is equivalent to say that $\mathbf{O}_i$ is built with the first $\hd$ columns of $\mathbf{M}_i$ from \eqref{cov_mat}. Moreover, as usual, $\mathbf{O}_i$ can be equivalently (and efficiently) computed as the first $\hd$ left singular vectors of $\B_i$, without explicitly computing the covariance matrix $\R_i$. The local PCA procedure is summarized in Algorithm 1 in Section A of the Supplemental Material. Now, a discrete approximation of the parallel transport operator \cite{lee2006Riemannian}, that is a linear transformation from $\mathcal{T}_{x_i}\mathcal{M}$ to $\mathcal{T}_{x_j}\mathcal{M}$, is needed. In the discrete domain, this translates to associating a matrix to each edge of the above graph. For $\epsilon_n$  small enough, $\mathcal{T}_{x_i}\mathcal{M}$ and $\mathcal{T}_{x_j}\mathcal{M}$ are close, meaning that the column spaces of $\Oi$ and $\Oj$ are similiar. If the column spaces coincide, then the matrices $\Oi$ and $\Oj$ are \textcolor{black}{related by an orthogonal transformation $\widetilde{\mathbf{O}}_{i,j} = \Oi^T\Oj$.} However, if $\M$ is curved, the column spaces of $\Oi$ and $\Oj$ will not coincide. \textcolor{black}{For this reason, the transport operator approximation $\Oij$ is defined as the closest orthogonal matrix to $\widetilde{\mathbf{O}}_{i,j}$ \cite{singer2012vdm} \textcolor{black}{(Eq. 2.4, page 6)}, i.e.:}
\textcolor{black}{\begin{equation}\label{disc_trans_prob}
    \Oij = \argmin_{\mathbf{O}:\mathbf{O}^T\mathbf{O} = \mathbf{I}} \| \mathbf{O} - \widetilde{\mathbf{O}}_{i,j} \|_{HS},
\end{equation}}
\textcolor{black}{where $\| \cdot \|_{HS}$ is the Hilbert-Schmidt norm.} The solution of problem \eqref{disc_trans_prob} is given by $\Oij = \mathbf{M}_{i,j}\mathbf{V}^T_{i,j} \in \mathbb{R}^{\hd\times\hd}$, where $\mathbf{M}_{i,j}$ and $\mathbf{V}_{i,j}$ are the SVD of $\widetilde{\mathbf{O}}_{i,j} = \mathbf{M}_{i,j}\boldsymbol{\Sigma}_{i,j}\mathbf{V}^T_{i,j}$ (and the restriction maps of the approximating sheaf); a pictorial view of this discrete approximating transport is presented in Fig. \ref{fig:transport}. We now build a block matrix $\S \in \mathbb{R}^{n\hd\times n\hd}$ and a diagonal block matrix $\D \in \mathbb{R}^{n\hd\times n\hd}$ with $\hd \times \hd $ blocks defined as
\begin{align}
\label{DS}
&\S_{i,j} = w_{i,j}\widetilde{\mathbf{D}}_i^{-1}\Oij\widetilde{\mathbf{D}}_j^{-1}, \quad\D_{i,i} = \textrm{ndeg}(i) \mathbf{I}_{\hd} ,
\end{align}
where  $\widetilde{\mathbf{D}}_i = \textrm{deg}(i)\mathbf{I}_{\hd}$, $\textrm{deg}(i) = \sum_{j}w_{i,j}$ is the degree of node $i$, and $\textrm{ndeg}(i) = \sum_{j}w_{i,j}/(\textrm{deg}(i)\textrm{deg}(j))$ is the normalized degree of node $i$. Finally, we define the (normalized) Sheaf Laplacian as the following matrix
\begin{equation}
\label{sheaf_laplacian}
    \Delta_n = \epsilon_n^{-1}\big(\D^{-1}\S - \mathbf{I}\big) \in \mathbb{R}^{n\hd\times n\hd},
\end{equation}
which is the approximated Connection Laplacian of the underlying manifold \black{$\ccalM$} \cite{singer2012vdm} \textcolor{black}{(page 13)}. The procedure to build the Sheaf Laplacian is summarized in Algorithm 2 in Section A of the Supplemental Material. A sheaf $\tm_n$ with this (orthogonal) structure represents a discretized version of  $\tm$. Further details in \cite{singer2012vdm}.

At this point, we introduce a linear sampling operator $\samp_n^{\mathcal{X}}:\ltm \rightarrow \ltmn$ to discretize a  tangent bundle signal $\F$ as a sheaf signal $\f_n \in \mathbb{R}^{n\hd}$ such that (refer to Appendix \ref{ap:proofth1} for the rigorous definition  of $\ltmn$):
\begin{align}
\label{sampler}
    &\f_n = \samp_n^{\mathcal{X}}\F, \\
    &\f_{n}(x_i):=[\f_n]_{((i-1)\hd+1):(i+1)\hd} = \Oi^T  d\i\F(x_i) \in \mathbb{R}^\hd,
\end{align}
where $((i-1)\hd+1):(i+1)\hd$ indicates all the components of $\f_{n}$ from the $((i-1)\hd+1)$-th to the $(i+1)\hd$-th component. In words, the sampling operator $\samp_n^{\mathcal{X}}$ in \eqref{sampler} takes  the embedded tangent signal $d\i\F$ as input, evaluates it on each point $x_i$ in the sampling set $\mathcal{X}$, projects the evaluated signals $d\i_{x_i}\left(\F(x_i)\right) \in \mathbb{R}^p$ over the $d$-dimensional subspaces spanned by the $\Oi$s from Definition \ref{def:approxtangent} and, finally, sequentially collects the $n$ projections $\Oi^Td\i\F(x_i) \in \mathbb{R}^\hd$ in the vector $\f_n \in \mathbb{R}^{n\hd}$, representing the discretized tangent bundle signal. We are now in the condition of plugging the discretized operator from \eqref{sheaf_laplacian} and signal from \eqref{sampler} in the definition of tangent bundle filter from \eqref{param_conv}, obtaining:
\begin{equation}
    \label{discr_param_conv}
    \g_n = \int_0^{\infty}\th(t)e^{t\Delta_n}\f_n\textrm{d}t = \h(\Delta_n)\f_n \in \mathbb{R}^{n\hd}.
\end{equation}
Following the same considerations of Section \ref{sec:tnn}, we can define a  discretized space tangent bundle neural network (D-TNN) as the stack of $L$ layers of the form
\begin{equation}
    \label{dt_tnn_layer}
    \f_{n,l+1}^u = \sigma\Bigg(\sum_{q=1}^{F_l}\h(\Delta_n)_l^{u,q}\f_{n,l}^q\Bigg), \; u = 1,...,F_{l+1},
\end{equation}
where (with a slight abuse of notation) $\sigma$ has the same point-wise law of $\widetilde{\sigma}$ in Definition \ref{def:pointwise}.  As in the continuous case, we describe the $u$th output of a D-TNN as a mapping $\bPsi_u\big(\mathcal{H}, \Delta_n, \{\x_n^q\}_{q=1}^{F_0}\big)$ to emphasize that it is parameterized by filters $\mathcal{H}$ and the Sheaf Laplacian $\Delta_n$. The D-TNN architecture comes with desirable theoretical properties. As the number of sampling points goes to infinity, the Sheaf Laplacian $\Delta_n$ converges to the Connection Laplacian $\Delta$ \cite{singer2012vdm} and the sheaf signal $\x_n$ consequently converges to the tangent bundle signal $\F$. Combining these results, we prove in the next theorem that the output of a D-TNN converges to the output of the corresponding underlying TNN as the sample size increases, validating the approximation fitness of a D-TNN. To the best of our knowledge, this is the first result to \textit{formally} connect Sheaf Neural Networks to tangent bundles of Riemann manifolds. Let us denote the injectivity radius and the condition number \cite{singer2017spectral,lee2006Riemannian} of the manifold $\ccalM$ with $\kappa$ and $\tau$, respectively.

\begin{theorem} \label{theorem-main}
    Let $\mathcal{X}=\{x_1,\dots,x_n\}\subset \mathbb{R}^p$ be a set of $n$ i.i.d. sampled points from measure $\mu$ over $\M \subset \mathbb{R}^p$ and $\F$ a tangent bundle signal. Let $\tm_n$ be the cellular sheaf built from $\mathcal{X}$ as explained above \textcolor{black}{with $\hd=d$ and $0<\epsilon_n \leq \min\{\kappa, \tau^{-1}\}$}. Let  $\bPsi_u\big(\mathcal{H}, \cdot, \cdot \big)$ be the $u$th output of a neural network of $L$ layers parameterized by the operator $\Delta$ of  $\tm$  or by the discrete operator $\Delta_n$ of  $\tm_n$. If:
\begin{description}
    \item[\, A1] the frequency response of filters in $\mathcal{H}$ are non-amplifying Lipschitz continuous;
    \item[\, A2]  \textcolor{black}{Each filter $\widetilde{h}(\cdot) \in \mathcal{H}$ is a lowpass filter};
    \item[\, A3] $\widetilde{\sigma}$ from Definition \ref{def:pointwise} is point-wise normalized Lipschitz continuous,
\end{description}
then \textcolor{black}{there exists a sequence of scales $\epsilon_n \rightarrow 0$ as $n \rightarrow \infty$ s.t.}
\begin{equation}
\label{convergence_main}
\lim_{n \rightarrow \infty} ||\bPsi_u\big(\mathcal{H}, \Delta_n, \samp_n^{\mathcal{X}}\F\big) - \samp_n^{\mathcal{X}}\bPsi_u\big(\mathcal{H}, \Delta,\F\big)||_{\tm_n} = 0,
\end{equation}
with the limit in probability, for each $u = 1, 2, \dots, F_L$.
\end{theorem}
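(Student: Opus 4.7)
The plan is to prove the result in two stages: first a filter-level convergence lemma for a single tangent bundle filter, then propagation through the $L$ layers of the network via Lipschitz continuity of the nonlinearity. Both stages rely in an essential way on the spectral convergence of the Sheaf Laplacian $\Delta_n$ to the Connection Laplacian $\Delta$, which is the Singer-Wu result quoted in the paper and which holds in probability for an appropriate vanishing sequence of scales $\epsilon_n \to 0$.

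For the filter-level step, I would show that for any $\F \in \ltm$ and any filter $\mathbf{h}$ satisfying A1--A2,
\begin{equation*}
\bigl\| \mathbf{h}(\Delta_n)\, \boldsymbol{\Omega}_n^{\mathcal X} \F \;-\; \boldsymbol{\Omega}_n^{\mathcal X}\, \mathbf{h}(\Delta) \F \bigr\|_{\tm_n} \;\longrightarrow\; 0 .
\end{equation*}
The approach is spectral. Using \eqref{filtered_series}, split the output into a head and a tail at some truncation index $N$. For the tail, invoke the lowpass assumption A2: since $|\hat h(\lambda_i)| = \mathcal{O}(\lambda_i^{-2})$ and the eigenvalues $\lambda_i$ accumulate only at $+\infty$, the tail $\sum_{i>N}\hat h(\lambda_i)\langle\F,\Phii\rangle\Phii$ has norm uniformly bounded by $\varepsilon_N \|\F\|$ with $\varepsilon_N\to 0$; the same bound applies to the discrete side because A2 is a property of $\hat h$ independent of the operator. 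For the head $\sum_{i \leq N}$, combine: (i) convergence of the $i$th eigenvalue $\lambda_i^n$ of $\Delta_n$ to $\lambda_i$, (ii) Lipschitz continuity of $\hat h$ (A1) to turn eigenvalue convergence into $|\hat h(\lambda_i^n) - \hat h(\lambda_i)| \to 0$, and (iii) convergence of the sampled eigenvector fields $\boldsymbol{\Omega}_n^{\mathcal X}\Phii$ to the eigenvectors $\widetilde{\boldsymbol{\phi}}^n_i$ of $\Delta_n$ (up to the standard sign/eigenspace alignment). The sampling operator is approximately an isometry on bandlimited signals, which lets one equate the discrete and continuous inner products in the limit.

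For the layer-by-layer step, I would proceed by induction on the layer index $l$. Assume the D-TNN and TNN outputs at layer $l$ satisfy $\|\,\f_{n,l}^q - \boldsymbol{\Omega}_n^{\mathcal X}\F_l^q\,\|_{\tm_n}\to 0$ in probability for every feature $q$. For layer $l+1$, insert and subtract $\sigma\bigl(\sum_q \mathbf{h}(\Delta_n)_l^{u,q}\, \boldsymbol{\Omega}_n^{\mathcal X}\F_l^q\bigr)$ and use the triangle inequality. The first resulting piece is controlled by the normalized Lipschitz continuity of $\widetilde{\sigma}$ from A3, the non-amplifying property of the filters from A1, and the inductive hypothesis; the second piece is exactly the filter-level convergence applied to each $\F_l^q$. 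The identity $\boldsymbol{\Omega}_n^{\mathcal X}\sigma(\F) = \sigma(\boldsymbol{\Omega}_n^{\mathcal X}\F)$ needed to identify $\sigma$ applied to the sampled signal with the sampling of $\sigma(\F)$ is a direct consequence of Definition \ref{def:pointwise}: since $\widetilde{\sigma}$ is pointwise in the ambient space and the sampler is built from the orthogonal projections $\mathbf{O}_i^T d\iota$, the two commute on every stalk. Iterating $L$ times yields \eqref{convergence_main} for the final output.

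The hard part will be the head of the filter-level argument: controlling the eigenvector convergence with the appropriate subspace/alignment care when eigenvalues have multiplicity or near-multiplicity, and ensuring that the quantitative rates of spectral convergence are uniform over the finite block $i \leq N$ so that one may send $n\to\infty$ first (fixing $N$) and then $N\to\infty$. This is precisely where the choice of vanishing $\epsilon_n$ must be coupled to $N$, and where the condition $\epsilon_n \leq \min\{\kappa,\tau^{-1}\}$ on the injectivity radius and condition number of $\ccalM$ becomes crucial, because it guarantees that the local PCA bases $\mathbf{O}_i$ faithfully approximate $\tmx$ and that the restriction maps $\mathbf{O}_{i,j}$ are valid discrete parallel transports; everything else in the proof is fairly standard bookkeeping on top of Singer--Wu's spectral convergence.
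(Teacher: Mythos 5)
Your proposal follows essentially the same route as the paper's proof: a head/tail spectral decomposition of the filter error with the tail controlled by the lowpass assumption A2, the head controlled by Singer--Wu spectral convergence of eigenvalues (via Lipschitz continuity of $\hat h$) and of sampled eigenvector fields, and then propagation through the $L$ layers using the normalized Lipschitz nonlinearity and the non-amplifying filters, with the sampler commuting with $\sigma$ by orthogonality of the $\mathbf{O}_i$. The paper's bookkeeping differs only cosmetically (it first splits the error into eigenvalue-error, eigenvector/inner-product-error, and tail terms before truncating each), so your plan is a faithful blueprint of the actual argument.
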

\begin{proof}
    See Appendix \ref{ap:proofth1}.
\end{proof}
\begin{remark}
    \textcolor{black}{Denoting the Sheaf Laplacian with $\Delta_n$ is an abuse of notation, because Theorem 1 is a condition both on $\epsilon_n\to 0$ and $n\to \infty$. For this reason, we should employ a notation such as $\Delta_{n,\epsilon_n}$; however, we will keep $\Delta_n$ in the following for the sake of exposition and consistency.}
\end{remark}
\begin{table*}[t]
    \centering
    \scalebox{.75}{\textcolor{black}{\begin{tabular}{|c|c|c|}
    \hline
        &Tangent Bundle $\tm$ & Cellular Sheaf $\tm_n$ \\
        \hline\xrowht[()]{4pt}
        Signal & $\F$ &  $\f_n$ \\
        \hline\xrowht[()]{4pt}
        Laplacian &$\Delta$ & $\Delta_n$ \\
        \hline\xrowht[()]{7pt}
    Inner product & $\langle \F, \G \rangle_{\tm} = \int_{\M} \langle \F(x), \mathbf{G}(x) \rangle_{\tmx} \textrm{d}\mu(x)$ & $\langle\f_n,\g_n\rangle_{\tm_n}=\frac{1}{n}\sum_{i = 1}^n  \f_{n}(x_i) \dotp \mathbf{u}_{n}(x_i)$ \\
        \hline\xrowht[()]{7pt}
    Filter & $\G = \int_0^{\infty}\th(t)e^{t\Delta}\F\textrm{d}t = \h(\Delta)\F$ &  $\mathbf{g}_n=  \sum_{k=0}^{K-1} h_k e^{k\Delta_n}\f_n= \h(\Delta_n)\f$\\
        \hline\xrowht[()]{11pt}
    Neural Network &  $\F_{l+1}^u = \sigma\Big(\sum_{q=1}^{F_l}\h(\Delta)_l^{u,q}\F_l^q\Big)$ & $\f_{n,l+1}^u = \sigma\Big(\sum_{q=1}^{F_l}\h(\Delta_n)_l^{u,q}\f_{n,l}^q\Big)$
    \tabularnewline
    \hline
    \end{tabular}}}
    \caption{\textcolor{black}{Notation and summary of the continuous (on tangent bundles) framework and its discretization (on cellular sheaves)}}
    \label{wrap_Table}
\end{table*}
\textcolor{black}{
Theorem 1 requires the filters to be lowpass. This can be challenging in a learning context because the filters are learned end-to-end and they may or may not satisfy this hypothesis. Thus, the practical implication of Theorem 1 is that it is possible to train TNNs on sampled manifolds although we do not offer an explicit method to guarantee that this is indeed attained. A first important point to make is that this condition is not spurious, as it is a minimal condition imposed in the proof of Theorem 1 to guarantee convergence. A second important point is that filters can be forced to be lowpass by constraining the filters coefficients during training, if needed. Here we do not advocate the use of these constraints.}

\vspace{-.2cm}
\subsection{Discretization in the Time Domain}\label{sec:time_dis}
The discretization in space introduced in the previous section is still not enough for implementing TNNs in practice. Indeed, learning the continuous time function $\tilde{h}(t)$ is in general infeasible. For this reason, we discretize $\tilde{h}(t)$ in the continuous time domain by fixing a sampling interval $T_s>0$. In this way, we can replace the filter response function with a series of coefficients $h_k = \tilde{h}(k T_s)$, $k =0 ,1, 2\dots$. Fixing $T_s=1$ and taking $K$ samples over the time horizon, the discrete-time version of the convolution in \eqref{convolution} is given by
\begin{equation}
\label{eqn:manifold_convolution_discrete}
    \bbh(\Delta_n) \F(x)= \sum_{k=0}^{\infty} h_k e^{k\Delta_n}\F(x), 
\end{equation}
which can be seen as a finite impulse response (FIR) filter with shift operator $e^{\Delta_n}$. We are now in the condition of injecting the space discretization from Section \ref{sec:disc} in the finite-time architecture in \eqref{eqn:manifold_convolution_discrete}, thus finally obtaining an implementable tangent bundle filter that exploits the approximating cellular sheaf $\tm_n$ as
\begin{equation}
\label{eqn:discrete_manifold_convolution_discrete}
  \mathbf{g}_n=  \bbh(\Delta_n) \f_n = \sum_{k=0}^{K-1} h_k e^{k\Delta_n}\f_n.
\end{equation}
The discretized manifold filter of order $K$ can be seen as a generalization of graph convolution to the orthogonal cellular sheaf domain. Thus, we refer $e^{\Delta_n}$ as a sheaf shift operator. At this point, by replacing the filter $\bbh_l^{pq}(\Delta_n)$ in \eqref{dt_tnn_layer} with \eqref{eqn:discrete_manifold_convolution_discrete}, we obtain the following architecture:
\begin{equation}
    \label{sheaf_nn_layer}
    \f_{n,l+1}^u = \sigma\Bigg(\sum_{q=1}^{F_l}\sum_{k = 0}^{K-1}h_{k,l}^{u,q}\big(e^{\Delta_n}\big)^k\f_{n,l}^q\Bigg), \; u = 1,...,F_{l+1},
\end{equation}
that we refer to as discretized space-time tangent bundle neural network (DD-TNN). DD-TNNs are a novel principled variant of the recently proposed Sheaf Neural Networks \cite{hansen2020sheafnn,bodnar2022sheafdiff,barbero2022sheafnnconn}, with $e^{\Delta_n}$  as (sheaf) shift operator and order $K$ diffusion. To better enhance this similarity, we rewrite the layer in \eqref{sheaf_nn_layer} in  matrix form by introducing the matrices $\mathbf{X}_{n,l}=\{\f_{n,l}^u\}_{u=1}^{F_{l}}\in \mathbb{R}^{n\hd \times F_{l}}$, and $\mathbf{H}_{l,k} = \{h_{k,l}^{u,q}\}_{q=1,u=1}^{F_l,F_{l+1}}\in \mathbb{R}^{F_l \times F_{l+1}}$, as
\begin{equation}\label{tnn_matrix}
    \mathbf{X}_{n,l+1} = \sigma\Bigg(\sum_{k = 0}^{K-1}\big(e^{\Delta_n}\big)^k\mathbf{X}_{n,l}\mathbf{H}_{l,k}\Bigg) \; \in \mathbb{R}^{n\hd \times F_{l+1}},
\end{equation}
where the filter weights $\{\mathbf{H}_{l,k}\}_{l,k}$ are learnable parameters. Finally, we have completed the process of building TNNs from (orthogonal) cellular sheaves and back.  The proposed methodology also shows that manifolds and their tangent bundles can be seen as the limits of graphs and  (orthogonal) cellular sheaves on top of them. A summary of the proposed continuous framework on tangent bundles and its discretization on orthogonal cellular sheaves is presented in Table \ref{wrap_Table}. \textcolor{black}{Please notice that, when $T_s=1$ and $K=1$ in \eqref{tnn_matrix}, the standard Sheaf Neural Network from \cite{hansen2020sheafnn} (up to an additional channel mixing matrix) with the exponential of the sheaf Laplacian as shift operator is recovered.}
\vspace{-.1cm}

\section{Consistency of Tangent Bundle Convolutions}\label{sec:consistency} 

\textcolor{black}{The tangent bundle convolution in Definition \ref{def:tangent-bundle-filter} provides a definition of a convolution that is compatible with convolutions on manifold scalar fields, convolutions on graphs, and (standard) convolutions for signals in time.}

\textcolor{black}{The manifold convolution from \cite{wang2022convolution} is recovered when the bundle is a scalar bundle, i.e. when scalar functions $f:\mathcal{M}\rightarrow \mathbb{R}$ over the manifold are considered. In this case, the Connection Laplacian $\Delta$ reduces to the usual Laplace-Beltrami operator \cite{lee00manif}, here denoted with $\Delta_{\mathcal{M}}: \mathcal{L}_2(\mathcal{M})\rightarrow\mathcal{L}_2(\mathcal{M})$, and the resulting convolution, given a filter $\widetilde{h}:\mathbb{R}^+\rightarrow \mathbb{R}$, is
\begin{equation}
    g(x) = \big(\widetilde{h} \star_{\mathcal{M}} f)(x) = \int_0^{\infty}\widetilde{h}(t)e^{-t\Delta_\mathcal{M}}f(x)\textrm{d}t.
\end{equation}
This expression is both the manifold convolution from \cite{wang2022convolution} and a particular case of \eqref{param_conv}. The negative sign comes from the convention to define the standard Laplacian as a positive semidefinite operator whereas the Connection Laplacian is defined to be negative semidefinite.}

\textcolor{black}{Given a set of $n$ points $\mathcal{X} \subset \mathbb{R}^p$ sampled from the manifold, we further recover a form of graph convolution \cite{shuman2013emerging, gama2018convolutional}. In particular, if the manifold $\mathcal{M}$ is discretized as a geometric graph $\mathcal{M}_n$ whose nodes are the sampled points, the Laplace-Beltrami operator $\Delta_{\mathcal{M}}$ is discretized as a graph Laplacian $\Delta_{\mathcal{M},n} \in \mathbb{R}^{n\times n}$ whose entries are the weights $w_{i,j}$ of equation \eqref{graph_weights}.  If we further discretize $f:\mathcal{M}\rightarrow \mathbb{R}$ as a graph signal $\mathbf{f}_n:\mathcal{M}_n\rightarrow \mathbb{R}$, the resulting convolution is
\begin{equation}\label{graph_conv}
   \mathbf{g}_n=\big(\widetilde{h} \star_{\mathcal{M}_n} \mathbf{f}_n) = \int_0^{\infty}\widetilde{h}(t)e^{-t\Delta_{\mathcal{M},n}}\mathbf{f}_n\textrm{d}t.
\end{equation}
This is a particular case of \eqref{discr_param_conv} and can be interpreted as an exponential form of a graph convolution. Further discretizing the filter across the index $t$ as we do in Section IV.C. yields the graph convolution
\begin{equation} \label{graph_conv_dis}
  \mathbf{g}_n=  \sum_{k=0}^{K-1} h_k (e^{-\Delta_{\mathcal{M},n}})^k\mathbf{f}_n.
\end{equation}
This is a FIR graph filter with $e^{-\Delta_{\mathcal{M},n}}$ used as a shift operator. The expression can be made more familiar if we approximate the exponential by $e^{-\Delta_{\mathcal{M},n}}\approx \mathbf{I}_n -\Delta_{\mathcal{M},n}$.}

\textcolor{black}{The standard time convolution is recovered when the manifold is the real line $\mathbb{R}$, the functions $f:\mathbb{R}\rightarrow \mathbb{R}$ are scalar functions, and the operator employed in the heat equation in \eqref{diff_eq} is replaced by the derivative operator $\partial/\partial x$. In particular, due to the fact that the exponential of the derivative operator is a time shift operator, we can write $e^{-t\partial/\partial x}f(x) = f(x-t)$. In this case, the resulting convolution is
\begin{align}\label{eqn_continuous_time_convolutions}
    g(x) = \big(\widetilde{h} \star_{\mathbb{R}} f)(x) &= \int_0^{\infty}\widetilde{h}(t)e^{-t\partial/\partial x}f(x)\textrm{d}t \nonumber \\
    &= \int_{0}^\infty \widetilde{h}(t) f(x-t) \,\text{d}t,
\end{align}
This is the (standard) time convolution and also a particular case of \eqref{param_conv}.} \textcolor{black}{An additional amenable theoretical feature of our tangent bundle convolution is its consistency with the framework of Algebraic Signal Processing (ASP) \cite{parada2020algebraic,puschel2008algebraic}. An ASP model is made of four components: (i) A vector space $\mathbb{V}$ where the signals of interest live. (ii) The space $\text{End}(\mathbb{V})$ of endomorphisms of $\mathbb{V}$ containing the linear maps that can be applied to the signals in $\mathbb{V}$. (iii) An Algebra $\mathbb{A}$ that defines abstract convolutional filters. (iv) A homomorphism $\rho$ that maps filters in $\mathbb{A}$ to endomorphisms that can be applied to signals. 
In our case, the vector space $\mathbb{V}$ is made of tangent bundle signals, and the algebra $\mathbb{A}$ is the algebra $(\mathcal{L}_1(\mathbb{R}_{+}),\star_\mathbb{R} )$ of absolute integrable functions in $\mathbb{R}_{+}$  with the standard convolution $\star_\mathbb{R}$ as the product. The homomorphism $\rho$ maps the filter $\widetilde{h}(t)$ to the tangent bundle filter $\rho(\widetilde{h})$ whose action on a signal $\mathbf{F}$ is
\begin{align}\label{eqn_asp_response_1}
   \rho(\widetilde{h}) \circ \mathbf{F} (s)
      = \int_0^\infty 
           \widetilde{h}(t) e^{t \Delta} \mathbf{F} (s) \, dt . 
\end{align}
This is clearly the definition we obtain in \eqref{param_conv} by combining \eqref{exp_sol} and \eqref{convolution}. It is trivial to verify that $\rho(\widetilde{h})$ is a homomorphism.}

\textcolor{black}{
An alternative definition of tangent bundle convolution would be obtained if we replace the algebra $\mathbb{A}$ with the algebra of polynomials. Thus, filters would be polynomials $\widetilde{h}(t) = \sum_{k=0}^{K} h_k t^k$ and the tangent bundle filters would be 
\begin{align}\label{eqn_asp_response_2}
   \rho(\widetilde{h}) \circ \mathbf{F} (s)
      = \sum_{k=0}^{K} 
           h_k \Delta^k  \mathbf{F} (s) .
\end{align}
In this latter case, discretizing the manifold would give rise to graph filters defined as polynomials of the graph Laplacian. In this paper we prefer to work with \eqref{eqn_asp_response_1} rather than \eqref{eqn_asp_response_2} because it leads to the connection with convolutions in continuous time stated in \eqref{eqn_continuous_time_convolutions}. This connection can't be made if we adopt  \eqref{eqn_asp_response_2} as a definition of tangent bundle filter. It is important to remark that if we adopt \eqref{eqn_asp_response_2} as a definition a similar convergence theorem holds. We just need to change the definition of the filter's frequency response to the polynomial $\sum_{k=0}^{K} h_k\lam^k$ and proceed to adapt assumptions and derivations.}
\vspace{-.2cm}
\section{Numerical Results}\label{sec:num_res}
In this section, we assess the performance of Tangent Bundle Neural Networks on \textcolor{black}{four} tasks: denoising of a tangent vector field on the torus (synthetic data),  reconstruction from partial observations of the Earth wind field (real data), forecasting of the Earth wind field (real data), obtained via a recurrent version of the proposed architecture,  and \textcolor{black}{binary manifold classification (synthetic data)}. In this work, we are interested in showing the advantage of including information about the tangent bundle structure for processing tangent bundle signals. For this reason, in the following experiments we always use the vanilla DD-TNN architecture in \eqref{tnn_matrix} without any additional modules \textcolor{black}{(e.g.~readout MLP layers)}, and we compare our architectures against vanilla Manifold Neural Networks (MNNs) from \cite{wang2022convolution}, convolutional architectures built in a similar way to ours but taking into account only the manifold structure. MNNs are implemented as GNNs with the exponential of the normalized cloud Laplacian \cite{belkin2001laplacian, wang2022convolution}. \textcolor{black}{Moreover, we also compare DD-TNNs against Multi-Layer Perceptrons (MLPs) \cite{haykin1994neural} in the denoising and reconstruction tasks, against Recurrent Neural Networks (RNNs) \cite{rumelhart1985learning} in the forecasting task, and against 3D-CNN in the classification task. Therefore, from a discrete point of view, we present a comparison between a specific (novel and principled) Sheaf Neural Networks class (DD-TNNs, which introduce a relational inductive bias \cite{battaglia2018relational} given by the tangent bundle/sheaf structure), a specific Graph Neural Networks class (MNNs, which introduce a relational inductive bias given by the manifold/graph structure), and Multi-Layer Perceptrons/Recurrent Neural Networks (MLPs/RNNs, which introduce no relational inductive biases).}  It is clear that the employed classes of architectures could be enriched with many additional components (biases, layer normalization, dropout, gating, just to name a few), and it is also clear that a huge number of other architectures could be tailored to the proposed tasks, but testing them is beyond the scope of this paper.\footnote{Our implementation of TNNs \& datasets available at \url{https://github.com/clabat9/Tangent-Bundle-Neural-Networks}}
\begin{table*}[t!]
\centering
\scalebox{.75}{\begin{tabular}{|l|cccc|}
\hline
& & $\tau =  10^{-2}$ & $\tau = 10^{-1}$ & $\tau = 3 \cdot 10^{-1}$ \\
\hline
\multicolumn{1}{|l|}{\multirow{3}{*}{$\textrm{E}\{n\}=100$}} & \multicolumn{1}{c|}{DD-TNN} & \multicolumn{1}{c|}{$\mathbf{2.02\cdot 10^{-4}} \pm 1.88\cdot 10^{-5}$} & \multicolumn{1}{c|}{$\mathbf{1.78\cdot 10^{-2}} \pm 1.96 \cdot 10^{-3}$} & \multicolumn{1}{c|}{$\mathbf{1.35\cdot 10^{-1}} \pm 1.42 \cdot 10^{-2}$} \\
\multicolumn{1}{|l|}{} & \multicolumn{1}{c|}{MNN} & \multicolumn{1}{c|}{$7.33\cdot 10^{-4} \pm 4.61 \cdot 10^{-4}$} & \multicolumn{1}{c|}{$2.45\cdot 10^{-2} \pm 4.26 \cdot 10^{-3}$} & \multicolumn{1}{c|}{$2.19\cdot 10^{-1} \pm 3.56 \cdot 10^{-2}$} \\
\multicolumn{1}{|l|}{} & \multicolumn{1}{c|}{\textcolor{black}{MLP}} & \multicolumn{1}{c|}{$2.34\cdot 10^{-4} \pm 2.88 \cdot 10^{-5}$} & \multicolumn{1}{c|}{$1.83\cdot 10^{-2} \pm 2.48 \cdot 10^{-3}$} & \multicolumn{1}{c|}{$1.52\cdot 10^{-1} \pm 2.15 \cdot 10^{-2}$} \\
\hline
\multicolumn{1}{|l|}{\multirow{3}{*}{$\textrm{E}\{n\}=200$}} & \multicolumn{1}{c|}{DD-TNN} & \multicolumn{1}{c|}{$\mathbf{2.06\cdot 10^{-4}} \pm 1.46\cdot 10^{-5}$} & \multicolumn{1}{c|}{$\mathbf{1.82\cdot 10^{-2}} \pm 1.18 \cdot 10^{-3}$} & \multicolumn{1}{c|}{$\mathbf{1.36\cdot 10^{-1}} \pm 1.05 \cdot 10^{-2}$} \\
\multicolumn{1}{|l|}{} & \multicolumn{1}{c|}{MNN} & \multicolumn{1}{c|}{$7.78\cdot 10^{-4} \pm 5.76 \cdot 10^{-4}$} & \multicolumn{1}{c|}{$2.50\cdot 10^{-2} \pm 3.90 \cdot 10^{-3}$} & \multicolumn{1}{c|}{$2.11\cdot 10^{-1} \pm 3.30 \cdot 10^{-2}$} \\
\multicolumn{1}{|l|}{} & \multicolumn{1}{c|}{\textcolor{black}{MLP}} & \multicolumn{1}{c|}{$2.28\cdot 10^{-4} \pm 3.52 \cdot 10^{-5}$} & \multicolumn{1}{c|}{$1.88\cdot 10^{-2} \pm 2.88 \cdot 10^{-3}$} & \multicolumn{1}{c|}{$1.55\cdot 10^{-1} \pm 2.06 \cdot 10^{-2}$} \\
\hline
\multicolumn{1}{|l|}{\multirow{3}{*}{$\textrm{E}\{n\}=300$}} & \multicolumn{1}{c|}{DD-TNN} & \multicolumn{1}{c|}{$\mathbf{2.05\cdot 10^{-4}} \pm 1.07\cdot 10^{-5}$} & \multicolumn{1}{c|}{$\mathbf{1.80\cdot 10^{-2}} \pm 1.01 \cdot 10^{-3}$} & \multicolumn{1}{c|}{$\mathbf{1.31\cdot 10^{-1}} \pm 7.91 \cdot 10^{-3}$} \\
\multicolumn{1}{|l|}{} & \multicolumn{1}{c|}{MNN} & \multicolumn{1}{c|}{$6.64\cdot 10^{-4} \pm 4.13 \cdot 10^{-4}$} & \multicolumn{1}{c|}{$2.43\cdot 10^{-2} \pm 4.01 \cdot 10^{-3}$} & \multicolumn{1}{c|}{$2.05\cdot 10^{-1} \pm 3.06 \cdot 10^{-2}$} \\
\multicolumn{1}{|l|}{} & \multicolumn{1}{c|}{\textcolor{black}{MLP}} & \multicolumn{1}{c|}{$2.36\cdot 10^{-4} \pm 2.87\cdot 10^{-5}$} & \multicolumn{1}{c|}{$1.85\cdot 10^{-2} \pm 2.25 \cdot 10^{-3}$} & \multicolumn{1}{c|}{$1.51\cdot 10^{-1} \pm 1.87 \cdot 10^{-2}$} \\
\hline
\multicolumn{1}{|l|}{\multirow{3}{*}{$\textrm{E}\{n\}=400$}} & \multicolumn{1}{c|}{DD-TNN} & \multicolumn{1}{c|}{$\mathbf{2.00\cdot 10^{-4}} \pm 9.60\cdot 10^{-6}$} & \multicolumn{1}{c|}{$\mathbf{1.80\cdot 10^{-2}} \pm 8.99 \cdot 10^{-4}$} & \multicolumn{1}{c|}{$\mathbf{1.35\cdot 10^{-1}} \pm 8.03 \cdot 10^{-3}$} \\
\multicolumn{1}{|l|}{} & \multicolumn{1}{c|}{MNN} & \multicolumn{1}{c|}{$6.84\cdot 10^{-4} \pm 6.28 \cdot 10^{-4}$} & \multicolumn{1}{c|}{$3.45\cdot 10^{-2} \pm 5.88 \cdot 10^{-2}$} & \multicolumn{1}{c|}{$2.55\cdot 10^{-1} \pm 9.50 \cdot 10^{-2}$} \\
\multicolumn{1}{|l|}{} & \multicolumn{1}{c|}{\textcolor{black}{MLP}} & \multicolumn{1}{c|}{$2.26\cdot 10^{-4} \pm 3.27\cdot 10^{-5}$} & \multicolumn{1}{c|}{$1.86\cdot 10^{-2} \pm 2.28 \cdot 10^{-3}$} & \multicolumn{1}{c|}{$1.58\cdot 10^{-1} \pm 1.90 \cdot 10^{-2}$} \\
\hline
\end{tabular}}
\caption{MSE on the torus denoising task}
\label{table:results_torus}
\end{table*}
\vspace{-.3cm}
\subsection{Torus Denoising}\label{torus_den}
\vspace{-.1cm}
We design a denoising task on a 2-dimensional torus ($\ccalM=\mathcal{T}_2$) and its tangent bundle. \textcolor{black}{It is well known that the $2$-torus, the $2$-sphere, the real projective plane, together with their connected sums completely classify closed $2$-dimensional manifolds, thus it is a good manifold to test our architecture.} A \textcolor{black}{parameterization of the} 2-dimensional torus is obtained by revolving a circle in three-dimensional space about an axis that is coplanar with the circle:
$[x,y,z] = [(b+a\cos{\theta})\cos{\phi},(b+a\cos{\theta})\sin{\phi},r\sin{\theta}]$,
where $\phi,\theta \in [0, 2\pi)$, $a$ is the radius of the tube, and  $b$ is the distance from the center of the tube to the center of the torus; $b/a$ is called the aspect ratio. In this experiment, we work on a ring torus, thus a torus with aspect ratio greater than one (in particular, we choose $b= 0.3$, $a = 0.1$). We uniformly sample the torus on $n$ points $\mathcal{X}=\{x_1,\dots,x_n\}$, and we compute the corresponding cellular sheaf $\tm_n$, Sheaf Laplacian $\Delta_n$ and signal sampler $\samp_n^{\mathcal{X}}$ as explained in Section \ref{sec:space_disc}, with $\epsilon_{\textrm{PCA}}=0.8$ and $\epsilon_n = 0.5$. We consider the tangent vector field over the torus given by $d\i\F(x,y,z)=(-\sin{\theta},\cos{\theta},0) \in 
\mathbb{R}^3$.
At this point, we add AWGN with variance $\tau^2$ to $ d\i\F$ obtaining a noisy field $\widetilde{ d\i\F}$, then we use $\samp_n^{\mathcal{X}}$ to sample it, obtaining $\widetilde{\mathbf{f}}_n \in \mathbb{R}^{2n}$. We test the performance of the TNN architecture by evaluating its ability to denoising $\widetilde{\mathbf{f}}_n$. We exploit a 3 layers architecture with $8$ and $4$ hidden features, and $1$ output feature (the denoised signal),  using $K=2$ in each layer, with $\textrm{Tanh()}$ non-linearities in the hidden layers and a linear activation on the output layer; \textcolor{black}{the architecture hyperparameters have been chosen with hyperparameters sweeps}. We train the architecture to minimize the square error $\|\widetilde{\mathbf{f}}_n - \mathbf{f}_{n}^o\|^2$ between the noisy signal $\widetilde{\mathbf{f}}_n$ and the output of the network $\mathbf{f}_{n}^o$ via the ADAM optimizer \cite{kingma2014adam} and a patience of 5 epochs, with hyperparameters set to obtain the best results. We compare our architecture with a 3 layers MNN (implemented via a GNN as explained in \cite{wang2022convolution}) with same hyperparameters; to make the comparison fair, $\widetilde{ d\i\F}$ evaluated on $\mathcal{X}$ is given as input to the MNN, organized in a  matrix $\widetilde{\F}_n \in \mathbb{R}^{n \times 3}$. We train the MNN to minimize the square error $\|\widetilde{\F}_n - \mathbf{F}_{n}^o\|_F^2$, where $\|\|_F$ is the Frobenius Norm and $\mathbf{F}_{n}^o$ is the network output. It is trivial to see that the "two" MSEs used for TNN and MNN are completely equivalent due to the orthogonality of the projection matrices $\Oi$. In Table \ref{table:results_torus}, we evaluate TNNs and MNNs for four different expected sample sizes ($\textrm{E}\{n\} = 100$, $\textrm{E}\{n\} = 200$, $\textrm{E}\{n\} = 300$, and $\textrm{E}\{n\}=400$), for three different noise standard deviation ($\tau = 10^{-2}$,$\tau = \cdot10^{-1}$ and $\tau = 3 \cdot 10^{-1}$), showing the MSEs $\frac{1}{n}\|\mathbf{f}_n - \mathbf{f}_{n}^o\|^2$ and $\frac{1}{n}\|\F_n - \mathbf{F}_{n}^o\|_F^2$, where $\mathbf{f}_n$ is the sampling via $\samp_n^{\mathcal{X}}$ of the clean field and $\mathbf{F}_n$ is the matrix collecting the clean field evaluated on $\mathcal{X}$. 8 sampling realizations and 8 mask realizations per each of them are tested; to make the results consistent, divergent or badly trained runs are discarded if present, and then the results are averaged \textcolor{black}{(on average about 2 runs are discarded per each sampling realization)}. As the reader can notice from Table 1, TNNs always perform better than MNNs and MLPs, due to their "bundle-awareness", i.e. the sheaf structure. 
\begin{table*}[]
\centering
\scalebox{.75}{\begin{tabular}{|l|cccc|}
\hline
& & $\textrm{E}\{\widetilde{n}\} = 0.5n$ & $\textrm{E}\{\widetilde{n}\} = 0.3n$ & $\textrm{E}\{\widetilde{n}\} = 0.1n$ \\
\hline
\multicolumn{1}{|l|}{\multirow{3}{*}{$\textrm{E}\{n\}=100$}} & \multicolumn{1}{c|}{DD-TNN} & \multicolumn{1}{c|}{$\mathbf{1.93\cdot 10^{-2}} \pm 3.64\cdot 10^{-3}$} & \multicolumn{1}{c|}{$\mathbf{1.15\cdot 10^{-2}} \pm 2.75 \cdot 10^{-3}$} & \multicolumn{1}{c|}{$\mathbf{3.31\cdot 10^{-3}} \pm 1.62 \cdot 10^{-3}$} \\
\multicolumn{1}{|l|}{} & \multicolumn{1}{c|}{MNN} & \multicolumn{1}{c|}{$4.20\cdot 10^{-2} \pm 3.05 \cdot 10^{-2}$} & \multicolumn{1}{c|}{$3.12\cdot 10^{-2} \pm 1.86 \cdot 10^{-2}$} & \multicolumn{1}{c|}{$2.82\cdot 10^{-2} \pm 2.32 \cdot 10^{-2}$} \\
\multicolumn{1}{|l|}{} & \multicolumn{1}{c|}{\textcolor{black}{MLP}} & \multicolumn{1}{c|}{$2.00\cdot 10^{-2} \pm 3.99 \cdot 10^{-3}$} & \multicolumn{1}{c|}{$1.21\cdot 10^{-2} \pm 2.50 \cdot 10^{-3}$} & \multicolumn{1}{c|}{$3.61\cdot 10^{-3} \pm 1.70 \cdot 10^{-3}$} \\
\hline
\multicolumn{1}{|l|}{\multirow{3}{*}{$\textrm{E}\{n\}=200$}} & \multicolumn{1}{c|}{DD-TNN} & \multicolumn{1}{c|}{$\mathbf{1.99\cdot 10^{-2}} \pm 2.30\cdot 10^{-3}$} & \multicolumn{1}{c|}{$\mathbf{1.18\cdot 10^{-2}} \pm 1.62 \cdot 10^{-3}$} & \multicolumn{1}{c|}{$\mathbf{3.67\cdot 10^{-3}} \pm 1.23 \cdot 10^{-3}$} \\
\multicolumn{1}{|l|}{} & \multicolumn{1}{c|}{MNN} & \multicolumn{1}{c|}{$3.19\cdot 10^{-2} \pm 1.31 \cdot 10^{-2}$} & \multicolumn{1}{c|}{$2.74\cdot 10^{-2} \pm 1.55 \cdot 10^{-2}$} & \multicolumn{1}{c|}{$2.58\cdot 10^{-2} \pm 1.82 \cdot 10^{-2}$} \\
\multicolumn{1}{|l|}{} & \multicolumn{1}{c|}{\textcolor{black}{MLP}} & \multicolumn{1}{c|}{$2.03\cdot 10^{-2} \pm 2.28 \cdot 10^{-3}$} & \multicolumn{1}{c|}{$1.20\cdot 10^{-2} \pm 1.68 \cdot 10^{-3}$} & \multicolumn{1}{c|}{$3.69\cdot 10^{-3} \pm 1.17 \cdot 10^{-3}$} \\
\hline
\multicolumn{1}{|l|}{\multirow{3}{*}{$\textrm{E}\{n\}=300$}} & \multicolumn{1}{c|}{DD-TNN} & \multicolumn{1}{c|}{$\mathbf{1.88\cdot 10^{-2}} \pm 1.72\cdot 10^{-3}$} & \multicolumn{1}{c|}{$\mathbf{1.13\cdot 10^{-2}} \pm 1.54 \cdot 10^{-3}$} & \multicolumn{1}{c|}{$\mathbf{3.96\cdot 10^{-3}} \pm 1.00 \cdot 10^{-3}$} \\
\multicolumn{1}{|l|}{} & \multicolumn{1}{c|}{MNN} & \multicolumn{1}{c|}{$2.68\cdot 10^{-2} \pm 7.64 \cdot 10^{-3}$} & \multicolumn{1}{c|}{$2.41\cdot 10^{-2} \pm 1.05 \cdot 10^{-2}$} & \multicolumn{1}{c|}{$2.09\cdot 10^{-2} \pm 1.76 \cdot 10^{-2}$} \\
\multicolumn{1}{|l|}{} & \multicolumn{1}{c|}{\textcolor{black}{MLP}} & \multicolumn{1}{c|}{$1.95\cdot 10^{-2} \pm 1.74 \cdot 10^{-3}$} & \multicolumn{1}{c|}{$1.18\cdot 10^{-2} \pm 1.56 \cdot 10^{-3}$} & \multicolumn{1}{c|}{$4.00\cdot 10^{-3} \pm 8.85 \cdot 10^{-4}$} \\
\hline
\multicolumn{1}{|l|}{\multirow{3}{*}{$\textrm{E}\{n\}=400$}} & \multicolumn{1}{c|}{DD-TNN} & \multicolumn{1}{c|}{$\mathbf{1.95\cdot 10^{-2}} \pm 1.66 \cdot 10^{-3}$} & \multicolumn{1}{c|}{$\mathbf{1.14\cdot 10^{-2}} \pm 1.38  \cdot 10^{-3}$} & \multicolumn{1}{c|}{$\mathbf{3.70 \cdot 10^{-3}} \pm 8.55 \cdot 10^{-4}$} \\
\multicolumn{1}{|l|}{} & \multicolumn{1}{c|}{MNN} & \multicolumn{1}{c|}{$2.48\cdot 10^{-2} \pm 6.55 \cdot 10^{-3}$} & \multicolumn{1}{c|}{$2.52\cdot 10^{-2} \pm 1.20 \cdot 10^{-2}$} & \multicolumn{1}{c|}{$8.16 \cdot 10^{-2} \pm 1.87 \cdot 10^{-1}$} \\
\multicolumn{1}{|l|}{} & \multicolumn{1}{c|}{\textcolor{black}{MLP}} & \multicolumn{1}{c|}{$2.01\cdot 10^{-2} \pm 1.66 \cdot 10^{-3}$} & \multicolumn{1}{c|}{$1.19\cdot 10^{-2} \pm 1.24 \cdot 10^{-3}$} & \multicolumn{1}{c|}{$3.81\cdot 10^{-3} \pm 8.46 \cdot 10^{-4}$} \\
\hline
\end{tabular}}
\caption{MSE on the wind field reconstruction task}
\label{table:results_recon}
\end{table*}
\begin{table*}[]
\centering
\scalebox{.75}{\begin{tabular}{|l|cccc|}
\hline
& & $T_f = 20$ & $T_f = 50$ & $T_f = 80$ \\
\hline
\multicolumn{1}{|l|}{\multirow{3}{*}{$\textrm{E}\{n\}=100$}} & \multicolumn{1}{c|}{DD-TNN} & \multicolumn{1}{c|}{$\mathbf{8.39\cdot 10^{-2}} \pm 1.62\cdot 10^{-2}$} & \multicolumn{1}{c|}{$1.20\cdot 10^{-1} \pm 7.13 \cdot 10^{-2}$} & \multicolumn{1}{c|}{$1.36\cdot 10^{-1} \pm 5.05 \cdot 10^{-2}$} \\
\multicolumn{1}{|l|}{} & \multicolumn{1}{c|}{MNN} & \multicolumn{1}{c|}{$3.76\cdot 10^{-1} \pm 2.74 \cdot 10^{-1}$} & \multicolumn{1}{c|}{$6.18\cdot 10^{-1} \pm 2.09 \cdot 10^{-1}$} & \multicolumn{1}{c|}{$9.63\cdot 10^{-1} \pm 1.55 \cdot 10^{-2}$} \\
\multicolumn{1}{|l|}{} & \multicolumn{1}{c|}{\textcolor{black}{RNN}} & \multicolumn{1}{c|}{$8.89\cdot 10^{-2} \pm 3.06 \cdot 10^{-2}$} & \multicolumn{1}{c|}{$\mathbf{8.69\cdot 10^{-2}} \pm 3.19 \cdot 10^{-2}$} & \multicolumn{1}{c|}{$\mathbf{7.24\cdot 10^{-2}} \pm 2.63 \cdot 10^{-2}$} \\
\hline
\multicolumn{1}{|l|}{\multirow{3}{*}{$\textrm{E}\{n\}=200$}} & \multicolumn{1}{c|}{DD-TNN} & \multicolumn{1}{c|}{$\mathbf{8.14\cdot 10^{-2}} \pm 3.58\cdot 10^{-2}$} & \multicolumn{1}{c|}{$\mathbf{7.03\cdot 10^{-2}} \pm 1.40 \cdot 10^{-2}$} & \multicolumn{1}{c|}{$1.61\cdot 10^{-1} \pm 7.38 \cdot 10^{-2}$} \\
\multicolumn{1}{|l|}{} & \multicolumn{1}{c|}{MNN} & \multicolumn{1}{c|}{$3.49\cdot 10^{-1} \pm 2.03 \cdot 10^{-1}$} & \multicolumn{1}{c|}{$6.70\cdot 10^{-1} \pm 2.30 \cdot 10^{-1}$} & \multicolumn{1}{c|}{$5.03\cdot 10^{-1} \pm 1.39 \cdot 10^{-1}$} \\
\multicolumn{1}{|l|}{} & \multicolumn{1}{c|}{\textcolor{black}{RNN}} & \multicolumn{1}{c|}{$9.78\cdot 10^{-2} \pm 3.58 \cdot 10^{-2}$} & \multicolumn{1}{c|}{$7.54\cdot 10^{-2} \pm 4.92 \cdot 10^{-2}$} & \multicolumn{1}{c|}{$\mathbf{1.20\cdot 10^{-1}} \pm 5.99 \cdot 10^{-2}$} \\
\hline
\multicolumn{1}{|l|}{\multirow{3}{*}{$\textrm{E}\{n\}=300$}} & \multicolumn{1}{c|}{DD-TNN} & \multicolumn{1}{c|}{$\mathbf{6.43\cdot 10^{-2}} \pm 1.13\cdot 10^{-2}$} & \multicolumn{1}{c|}{$\mathbf{7.03\cdot 10^{-2}} \pm 3.16 \cdot 10^{-2}$} & \multicolumn{1}{c|}{$2.49\cdot 10^{-1} \pm 1.74 \cdot 10^{-1}$} \\
\multicolumn{1}{|l|}{} & \multicolumn{1}{c|}{MNN} & \multicolumn{1}{c|}{$6.69\cdot 10^{-1} \pm 2.12 \cdot 10^{-1}$} & \multicolumn{1}{c|}{$6.09\cdot 10^{-1} \pm 3.66 \cdot 10^{-1}$} & \multicolumn{1}{c|}{$8.83\cdot 10^{-1} \pm 8.60 \cdot 10^{-2}$} \\
\multicolumn{1}{|l|}{} & \multicolumn{1}{c|}{\textcolor{black}{RNN}} & \multicolumn{1}{c|}{$7.95\cdot 10^{-2} \pm 3.43 \cdot 10^{-2}$} & \multicolumn{1}{c|}{$8.68\cdot 10^{-2} \pm 4.06 \cdot 10^{-2}$} & \multicolumn{1}{c|}{$\mathbf{1.34\cdot 10^{-1}} \pm 4.66 \cdot 10^{-2}$} \\
\hline
\multicolumn{1}{|l|}{\multirow{3}{*}{$\textrm{E}\{n\}=400$}} & \multicolumn{1}{c|}{DD-TNN} & \multicolumn{1}{c|}{$8.93\cdot 10^{-2} \pm 2.78 \cdot 10^{-2}$} & \multicolumn{1}{c|}{$\mathbf{8.47\cdot 10^{-2}} \pm 1.67 \cdot 10^{-2}$} & \multicolumn{1}{c|}{$1.34 \cdot 10^{-1} \pm 4.35 \cdot 10^{-2}$} \\
\multicolumn{1}{|l|}{} & \multicolumn{1}{c|}{MNN} & \multicolumn{1}{c|}{$4.06\cdot 10^{-1} \pm 2.47 \cdot 10^{-1}$} & \multicolumn{1}{c|}{$7.50\cdot 10^{-1} \pm 2.86 \cdot 10^{-1}$} & \multicolumn{1}{c|}{$2.35 \cdot 10^{-1} \pm 9.49 \cdot 10^{-2}$} \\
\multicolumn{1}{|l|}{} & \multicolumn{1}{c|}{\textcolor{black}{RNN}} & \multicolumn{1}{c|}{$\mathbf{6.29\cdot 10^{-2}} \pm 2.66 \cdot 10^{-2}$} & \multicolumn{1}{c|}{$1.25\cdot 10^{-1} \pm 3.89 \cdot 10^{-2}$} & \multicolumn{1}{c|}{$\mathbf{5.19\cdot 10^{-2}} \pm 3.38 \cdot 10^{-2}$} \\
\hline
\end{tabular}}
\caption{MSE on the wind field forecasting task}
\label{table:results_forec}
\end{table*}
\begin{figure*}[t]
     \centering
    \hspace{-.5cm}
     \begin{subfigure}[b]{0.4\textwidth}
         \centering
         \includegraphics[scale = 0.145]{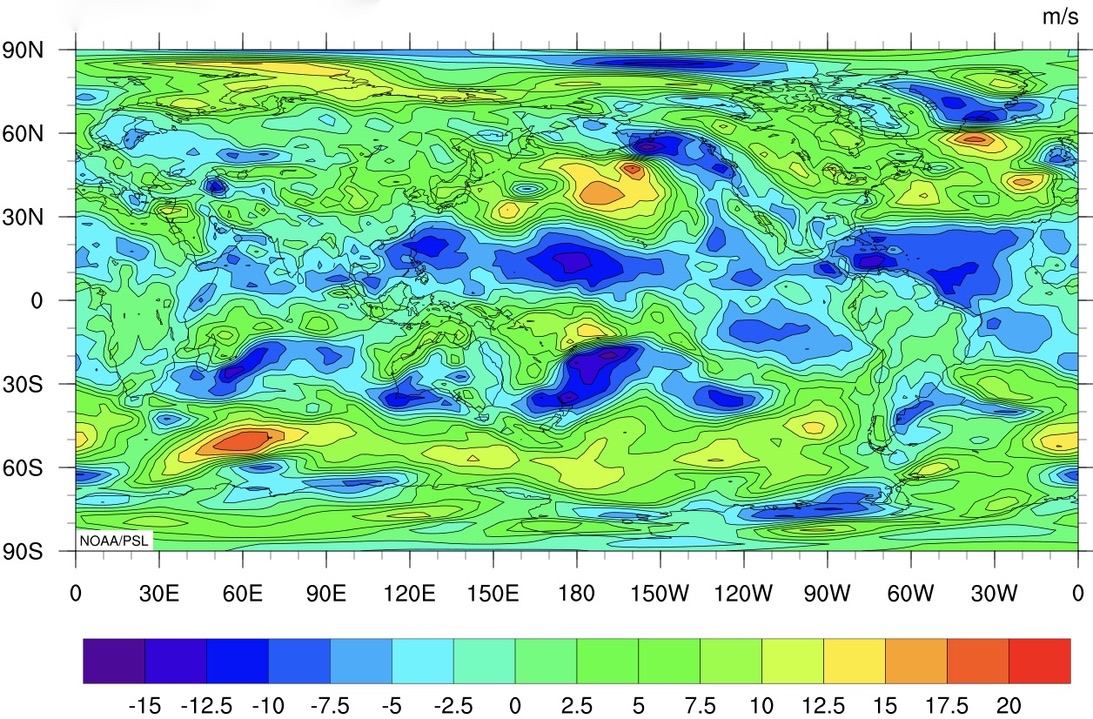}
         \caption{Zonal Wind}
         \label{fig:synth}
     \end{subfigure}
     \hspace{0.5cm}
     \begin{subfigure}[b]{0.4\textwidth}
         \centering
         \includegraphics[scale = 0.145]{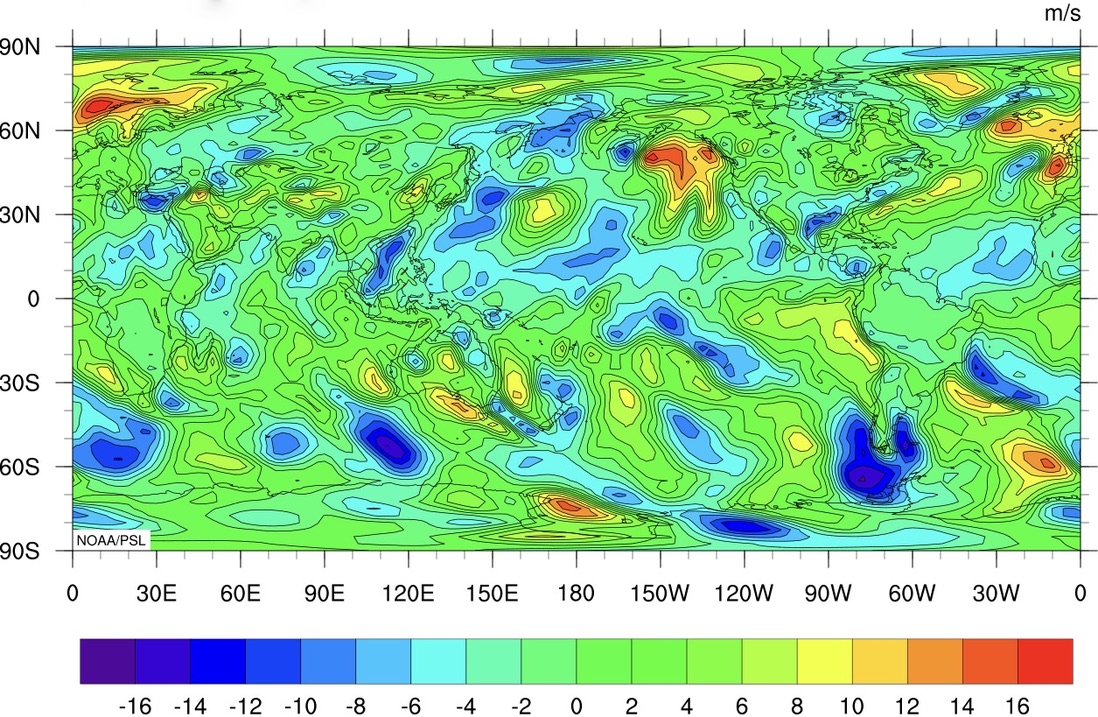}
         \caption{Meridional Wind}
         \label{fig:ocean}
     \end{subfigure}
        \caption{Visualization of Earth wind field on 1st of January 2016 (a) Zonal component. (b) Meridional component.}
        \label{fig:maps}
\end{figure*}
\vspace{-.2cm}
\subsection{Wind Field Reconstruction}\label{sec:wind_rec}
We design a reconstruction task on real-world data. We use daily average measurements (the tangent bundle signal) of Earth surface wind field collected by NCEP/NCAR\footnote{https://psl.noaa.gov/data/gridded/data.ncep.reanalysis.html}; in particular, we use the data corresponding to the wind field of the 1st of January 2016, consisting of regularly spaced observations covering the whole Earth surface. The observations are localized in terms of latitude and longitude, thus we convert them in 3-dimensional coordinates by using the canonical spherical approximation for the Earth with nominal radius $R=6356.8$. The wind field is a 2-dimensional tangent vector field made of a zonal component, following the local parallel of latitude, and a meridional component, following the local meridian of longitude. A visualization of the wind field is shown in Fig. \ref{fig:maps} (figures taken from the official data repository). We preprocess the data by scaling the observations to be in the range $[-1,1]$. We first randomly sample $n$ points to obtain the sampling set $\mathcal{X}$, the cellular sheaf $\tm_n$, and the Sheaf Laplacian $\Delta_n$ again with $\epsilon_{\textrm{PCA}}=0.8$ and $\epsilon_n = 0.5$; at this point, we mask $\widetilde{n}<n$ of these points, we collect them in a set $\widetilde{\mathcal{X}}^C \subset \mathcal{X}$, and we aim to infer their corresponding measurements exploiting the remaining available $n-\widetilde{n}$ measurements, collected in the set $\widetilde{\mathcal{X}} \subset \mathcal{X}$. This reconstruction problem can be equivalently seen as a semi-supervised regression problem. To tackle it, we first organize the data corresponding to the point in $\mathcal{X}$ in a matrix $\mathbf{F}_n \in \mathbb{R}^{n \times 2}$, where the first column collects the zonal components and the second column collects the meridional components. At this point, we build the matrix $\widetilde{\mathbf{F}}_n\in \mathbb{R}^{n \times 2}$, that is a copy of $\mathbf{F}$ except for the rows of $\mathbf{F}$ corresponding to the masked points in $\widetilde{\mathcal{X}}^C$, replaced with the mean of the measurements of the available points in  $\widetilde{\mathcal{X}}$. We then vectorize $\widetilde{\mathbf{F}}_n$ to obtain $\widetilde{\mathbf{f}}_{n} \in \mathbb{R}^{2n}$, the input tangent bundle signal. We now exploit the same DD-TNN architecture from Section \ref{torus_den}, with the same hyperparameters, to perform the reconstruction task by training it to minimize the reconstruction square error
\begin{equation}
\sum_{i \in \widetilde{\mathcal{X}}}\|\widetilde{\mathbf{f}}_n(i) - \mathbf{f}_{n}^o(i)\|^2
\end{equation}
between the available measurements $\mathbf{f}_n(i)$ and the output of the network corresponding to them $\mathbf{f}_{n}^o(i)$, $i \in \widetilde{\mathcal{X}}$. Again, we compare our architecture with the same MNN from Section \ref{torus_den}, to which we give as input the matrix $\widetilde{\mathbf{F}}$ and we train it to minimize $\sum_{i \in \widetilde{\mathcal{X}}}\|\widetilde{\F}_n(i) - \mathbf{F}_{n}^o(i)\|^2$, where  $\mathbf{F}_{n}^o$ is the network output and $\widetilde{\F}_n(i)$ indicates the $i-$th row of $\widetilde{\F}_n(i)$; being $\widetilde{\f}_n$ the vectorization of $\widetilde{\F}_n$, also in this case it is trivial to check the equivalence of the two MSEs. As evaluation metric, we use the reconstruction MSE on the measurements corresponding to the masked nodes $\frac{1}{n}\sum_{i \in \widetilde{\mathcal{X}}^C}\|\mathbf{f}_n(i) - \mathbf{f}_{n}^o(i)\|^2$.
In Table \ref{table:results_recon} we evaluate TNNs and MNNs for four different expected sample sizes ($\textrm{E}\{n\} = 100$, $\textrm{E}\{n\} = 200$, $\textrm{E}\{n\} = 300$, and $\textrm{E}\{n\}=400$), for three different masking probabilities ($\textrm{E}\{\widetilde{n}\} = 0.5n$, $\textrm{E}\{\widetilde{n}\} = 0.3n$, and $\textrm{E}\{\widetilde{n}\} = 0.1n$) per each of them (the probability of a node to being masked). As the reader can notice, TNNs are always able to perform better than MNNs and MLPs, keeping the performance stable with the number of samples and, of course, improving with more observations available.
\vspace{-.3cm}
\subsection{Wind Field Forecasting with Recurrent TNNs}
\vspace{-.1cm}
We design a forecasting task on the same wind field data from Section \ref{sec:wind_rec}. In particular, we use daily observation corresponding to the wind field from the 1st of January 2016 to 7 September 2016 to train the model and we use observations from the 1st of January 2017 to 7 September 2017 to test it. We, again,  randomly sample $n$ points to obtain the sampling set $\mathcal{X}$, the cellular sheaf $\tm_n$, and the Sheaf Laplacian $\Delta_n$; at this point, we organize the data corresponding to the sampled point in $\mathcal{X}$ in a sequence $\{\mathbf{F}_{n,t}\}_t$ indexed by time $t$ (daily interval), with each $\mathbf{F}_{n,t} \in \mathbb{R}^{n \times 2}$. As in Section \ref{sec:wind_rec}, we  vectorize $\{\mathbf{F}_{n,t}\}_t$  to obtain $\{\mathbf{f}_{n,t}\}_t$, the input tangent bundle signals, with each   $\mathbf{f}_{n,t}\in \mathbb{R}^{2n}$. We now introduce a hyperparameter $T_f>0$ representing the length of the predictive time window of the model, i.e., given in input a subsequence $\{\mathbf{f}_{n,t}\}_{t=T_s}^{t =T_s+T_f}$ starting at time $T_s$ of length $T_f$ , the model outputs a sequence $\{\mathbf{f}^{o}_{n,t}\}_{t=1}^{t =T_f}$ of length $T_f$  aiming at estimating the next $T_f$ element $\{\mathbf{f}_{n,t}\}_{t=T_s+T_f+1}^{t =T_s+2T_f+1}$ of the input sequence. To do so, we introduce a recurrent version of the proposed DD-TNNs, which, to the best of our knowledge, is also the first recurrent architecture working on cellular sheaves. The building block of the proposed recurrent architecture is a layer made of three components: a tangent bundle filter processing the current sequence element $\mathbf{f}_{n,t}$, a tangent bundle filter processing the current hidden state $\mathbf{z}_{t-1}$, i.e., the output of the layer computed on the previous sequence element, and a pointwise non-linearity. Formally, the layer reads as:
\begin{equation}\label{rtnn}
    \mathbf{z}_{t} = \sigma\Bigg(\sum_{k = 0}^{K-1}h_{k}\big(e^{\Delta_n}\big)^k\mathbf{f}_{n,t} + \sum_{k = 0}^{K-1}w_{k}\big(e^{\Delta_n}\big)^k\mathbf{z}_{t-1}\Bigg),
\end{equation}
with $t = T_s,..., T_s + T_f$, and $\mathbf{z}_{0}=\mathbf{0}$. To obtain the required estimates,  we can set $\{\mathbf{f}^{o}_{n,t}\}_{t=1}^{t =T_f}=\{\mathbf{z}_t\}_{t=1}^{t =T_f}$. This architecture can be used also in a Multi-Layer fashion: in this case, at layer $l$ and at time $t$, the first filter takes $\mathbf{z}_{l-1,t}$ (the current time $t$ hidden state of the previous layer $l-1$) as input, and the second filter takes $\mathbf{z}_{l,t-1}$ (the previous time $t-1$ hidden state of the current layer $l$) as input. Therefore, the resulting $L-$layers architecture is:
\begin{equation}\label{rtnn_ml}
    \mathbf{z}_{l,t} = \sigma\Bigg(\sum_{k = 0}^{K-1}h_{k,l}\big(e^{\Delta_n}\big)^k\mathbf{z}_{l-1,t} + \sum_{k = 0}^{K-1}w_{k,l}\big(e^{\Delta_n}\big)^k\mathbf{z}_{l,t-1}\Bigg),
\end{equation}
with $l=1,...,L$, $t = T_s,..., T_s + T_f$, and $\mathbf{z}_{0,t}=\mathbf{f}_{n,t}$. In this case, to obtain the required estimates,  we can set $\{\mathbf{f}^{o}_{n,t}\}_{t=1}^{t =T_f}=\{\mathbf{z}_{L,t}\}_{t=1}^{t =T_f}$. For the wind field forecasting task, the training set is made of all the possible $m=250-2T_f$ subsequences of length $2T_f$ of the 2016 data, we use a 3-layers Recurrent DD-TNN with $K=2$ and $\textrm{Tanh}$ non-linearities, and we train it to minimize the square error $\sum_{t =1}^m\sum_{\widetilde{t} =t}^{t+T_f}\|\mathbf{f}_{n,\widetilde{t}} - \mathbf{f}^{o}_{n,\widetilde{t}-t+1}\|_2^2$. To have a fair comparison, we set up the corresponding recurrent version of MNNs (RMNNs, a recurrent graph neural network) with the same structure, same hyperparameters, same loss but with inputs $\{\mathbf{F}_{n,t}\}_t$. As evaluation metric, we compute the MSE on the 2017 data after training. In Table \ref{table:results_forec} we evaluate RTNNs and RMNNs for four different expected sample sizes ($\textrm{E}\{n\} = 100$, $\textrm{E}\{n\} = 200$, $\textrm{E}\{n\} = 300$, and $\textrm{E}\{n\}=400$), and for three different time window lengths ($T_f = 20$, $T_f = 50$, and $T_f = 80$) per each of them. Also in this case, the bundle "awareness" of (recurrent) TNNs allows to reach significantly better results in all the tested scenarios w.r.t. (recurrent) MNNs, \textcolor{black}{outperforming RNNs too in most of the cases except for the long-term predictions cases ($T_f = 80$),  probably due to the absence of gating or memory mechanisms, useful also to improve training}. Moreover, in all the experiments, TNNs have fewer parameters than MNNs, due to the different organization of the data in the input layer. 
\begin{table}
\centering
\textcolor{black}{\scalebox{.75}{\begin{tabular}{|l|l|l|l|}
\hline
       & $\textrm{Id}()$    & $\textrm{Tanh}()$         \\
\hline
DD-TNN & $76.2 \% \pm 4.8$ & $87.5 \% \pm 0.9$      \\
\hline
MNN    & $56.3 \% \pm  5.3$ & $97.7 \% \pm 0.3$      \\
\hline
\textcolor{black}{3D-CNN}    & $\mathbf{86.3 \% \pm  2.3}$ & $\mathbf{98.5 \% \pm 0.6}$      \\
\hline
\end{tabular}}}
\caption{\textcolor{black}{Accuracy on the manifold classification task}}
\label{table:results_class}
\end{table}
\vspace{-.7cm}
\textcolor{black}{\subsection{Manifold Classification}\label{sec:man_class}
\begin{figure}[t]
         \centering
         \includegraphics[scale = .28]{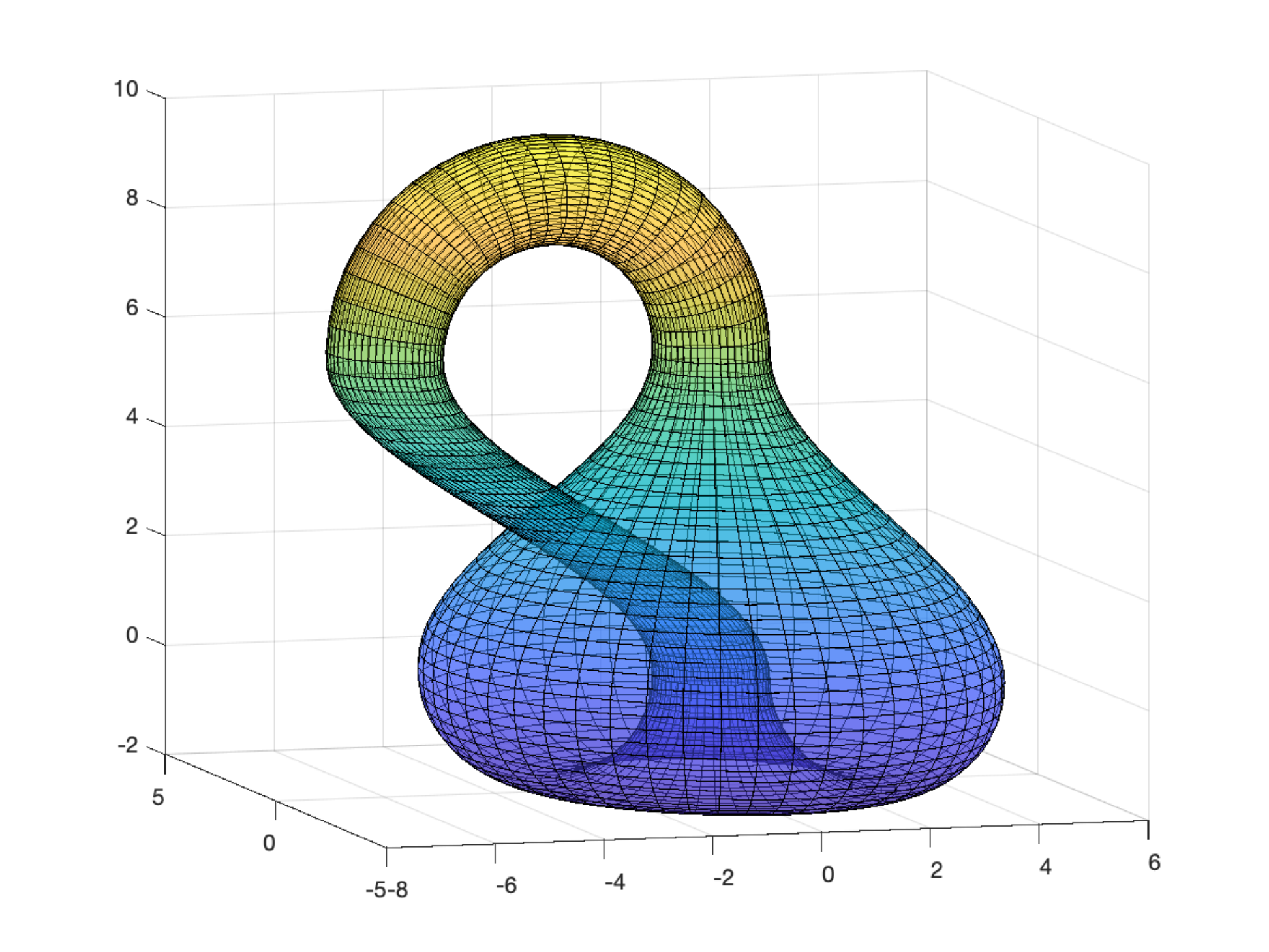}
         \label{fig:klein}
        \caption{\textcolor{black}{(b) An immersed Klein bottle.}}
        \label{fig:manifolds}
\end{figure}
We design an ill-conditioned binary manifold classification task. The goal is to assess whether TNNs are able to learn to distinguish between a torus and an immersed Klein bottle, given uninformative features, i.e. a constant vector field. The Klein bottle is a 2-dimensional non-orientable manifold (i.e.~with no ``inside'' or ``outside"). \textcolor{black}{The Klein bottle has to be properly embedded in $\mathbb{R}^4$ since it must pass through itself without the presence of a hole.  However, it can be immersed in $\mathbb{R}^3$, as depicted in Fig. depicted in Fig. \ref{fig:manifolds}(b). Working on the immersion makes the classification task ill-conditioned, because an immersed Klein Bottle is not even a proper manifold.}  The choice of giving as input uninformative features is made in order to evaluate if the network is able to solely leverage the geometric information contained in the approximated Connection Laplacian (the Sheaf Laplacian) to infer the manifold, resembling \cite{GIN_xu2019powerful,ribeiro2022aremore}. We opted to employ the immersed Klein bottle as one of the target manifolds to heuristically evaluate how our method tackles ill-conditioned tasks that do not match the theoretical justification of our architecture.  We create datasets of $2000$ data points, where each datapoint is computed with the following steps: i) choose if sampling the torus or the Klein bottle tossing a fair coin; ii) uniformly sample the manifold on $\textrm{E}\{n\}=100$ points and compute the corresponding cellular sheaf $\tm_n$, and  Sheaf Laplacian $\Delta_n$ (both the manifolds are normalized to be contained in a cube of unitary side length, such that the scale is not a discriminating feature); iii) associate to each sampled point the projection via the $\Oi$s of the constant vector field given by $d\i\F(x,y,z)=(1,1,1) \in 
\mathbb{R}^3$. We exploit the same DD-TNN architecture from Section \ref{torus_den}, with the same hyperparameters, plus a final 2-layers MLP classifier with a $\textrm{ReLU}()$ non-linearity after the first layer and a $\textrm{Softmax}()$ non-linearity after the second layer. We train the architecture to minimize the usual binary cross-entropy. We compare our architecture with the same MNN from Section \ref{torus_den}, where obviously the Sheaf Laplacian is replaced by the graph Laplacian. Unlike the experiments of the previous section, we found that in this case the employed non-linearity impacts the classification accuracy; in particular, we observed changes when the "non-linearity" is the identity function $\textrm{Id}()$ (thus when a cascade of discretized tangent bundle filters is used) or the $\textrm{Tanh}()$. For this reason we evaluate TNNs and MNNs using both $\textrm{Id}()$ and $\textrm{Tanh}()$. \textcolor{black}{Moreover, we also report the results of a simple 3D-CNN. To feed the data in the 3D-CNN, we follow the approach from \cite{qi2017pointnet}, i.e. we first convert each data point to the volumetric representation as an occupancy grid with resolution 6 × 6 × 6. The choice of the hyperparameters is made to keep the number of learnable parameters similar to DD-TNN and MNN.} The results are averaged over 8 realizations of the datasets, and per each of them the training dataset is obtained by sampling $80\%$ of the datapoints, and the test set is obtained with the remaining $20\%$. \textcolor{black}{As the reader can notice in Table \ref{table:results_class}, the 3D-CNN performs better than both MNN and DD-TNN. This is something that we could expect: as described above, this task is ill-conditioned for DD-TNN and MNN, and it is trivially easier to solve for an architecture designed for point clouds, the 3D-CNN w.r.t. architectures based on manifold diffusion operators. However, the performance of DD-TNN is still competitive even if the setting is disadvantaged.} Moreover, TNNs significantly perform better than MNNs when $\textrm{Id}()$ is employed, i.e. tangent bundle filters significantly perform better than manifold filters, while MNNs perform better than TNNs when an actual non-linearity, the $\textrm{Tanh}()$, is introduced.}
\vspace{-.1cm}
\section{Conclusions}\label{sec:conlcusion}
In this work, we introduced Tangent Bundle Filters and Tangent Bundle Neural Networks (TNNs), novel continuous architectures operating on tangent bundle signals, i.e. manifold vector fields. We made TNNs implementable by discretization in space and time domains, showing that their discrete counterpart is a principled variant of Sheaf Neural Networks. We proved that discretized TNNs asymptotically converge to their continuous counterparts, and we assessed the performance of TNNs on both synthetic and real dat{}a. This work gives a multifaceted contribution: on the methodological side, it is the first work to introduce a signal processing framework for signals defined on tangent bundles of Riemann manifolds via the Connection Laplacian; on the theoretical side, the presented discretization procedure and convergence result explicitly link the manifold domain with cellular sheaves, formalizing intuitions presented in works like \cite{barbero2022sheafnnconn}. \textcolor{black}{In future work, we will investigate more general classes of cellular sheaves that approximate unions of manifolds (perhaps representing multiple classes) or, more generally, stratified spaces \cite{stolz2020geometric,nanda2020local}. We believe our perspective on tangent bundle neural networks could shed further light on challenging problems in graph neural networks such as heterophily \cite{bodnar2022sheafdiff}, over-squashing \cite{digiovanni2023oversquashing}, or transferability \cite{ levie2021transferability}. Finally,we plan to tackle more sophisticated tasks with our proposed architectures.}
\vspace{-.1cm}
\appendix
\section{Appendix}
\vspace{-.1cm}
\subsection{Proof of Theorem 1}\label{ap:proofth1}
\noindent\textbf{\textit{Proof.}}  We define an inner product for sheaf signals $\mathbf{f}$ and $\mathbf{u}$ on a general cellular sheaf $\tm_n$ as
\begin{align}
\label{emp_metr_sheaf}
\langle \mathbf{f}, \mathbf{u} \rangle_{\tm_n}   & = \frac{1}{n}\sum_{i = 1}^n  \f_{n}(x_i) \dotp \mathbf{u}_{n}(x_i) ,
\end{align}
and the induced norm $||\mathbf{f}||^2_{\tm_n} = \langle \mathbf{f}, \mathbf{f} \rangle_{\tm_n}$.
Assuming that the points in $\mathcal{X}$ are sampled i.i.d. from the uniform probability measure $\mu$ given by the metric on $\M$ and that $\tm_n$ is built as in Section 5, the inner product in \eqref{emp_metr_sheaf} is equivalent to the following inner product for tangent bundle signals $\F$ and $\U$:
\begin{align}
\label{emp_metr}
\langle \F, \U \rangle_{\tm_n}   &= \int_{\M} \langle \F(x), \U(x) \rangle_{\tmx} \textrm{d}\mu_n(x) \nonumber \\
&= \frac{1}{n}\sum_{i = 1}^n \langle \F(x_i), \mathbf{U}(x_i) \rangle_{\mathcal{T}_{x_i}\M},
\end{align}
and the induced norm $||\F||^2_{\tm_n} = \langle \F, \F \rangle_{\tm_n}$, where $\mu_n = \frac{1}{n}\sum_{i=1}^n\delta_{x_i}$ is the empirical measure corresponding to $\mu$. Indeed, from \eqref{riemann_metric} and due to the orthogonality of the transformations $\Oi$ in Section 5, \eqref{emp_metr} can be  rewritten as
\begin{align}
\label{emp_metr_versions}
\langle \F, \U \rangle_{\tm_n}&= \frac{1}{n}\sum_{i = 1}^n  d\i\F(x_i) \dotp   d\i\mathbf{U}(x_i) \nonumber \\
&= \frac{1}{n}\sum_{i = 1}^n \Oi^T d\i\F(x_i) \dotp  \Oi^T d\i\mathbf{U}(x_i) \nonumber \\
&= \frac{1}{n}\sum_{i = 1}^n \f_{n}(x_i) \dotp \mathbf{u}_{n}(x_i)  = \langle \mathbf{f}_n, \mathbf{u}_n \rangle_{\tm_n}
\end{align}
where $\f_n = \samp_n^{\mathcal{X}}\F$ and $\mathbf{u}_n = \samp_n^{\mathcal{X}}\U$, respectively. We denote with $\ltmn$ the space of tangent bundle signals w.r.t. the empirical measure $\mu_n$ (or, equivalently, the space of sheaf signals w.r.t the norm induced by \eqref{emp_metr_sheaf}). In the following, we will denote the norm $||\cdot||_{\tm_n}$ with $||\cdot||$  when there is no risk of confusion. In \cite{singer2017spectral}, the spectral convergence of the constructed Sheaf Laplacian in \eqref{sheaf_laplacian} to the Connection Laplacian of the underlying manifold has been proved, and we exploit that result for proving the following proposition.

\noindent\textbf{\textit{Proposition 3. (Consequence of Theorem 6.3 \cite{singer2017spectral})}}
 Let $\mathcal{X}=\{x_1,\dots,x_n\}\subset \mathbb{R}^p$ be a set of $n$ i.i.d. sampled points from measure $\mu$ over $\M \subset \mathbb{R}^p$. Let $\tm_n$ be a cellular sheaf built from $\mathcal{X}$ as explained in Section 5, \textcolor{black}{with $\hd=d$ and $0<\epsilon_n \leq \min\{\kappa^{-1}, \iota\}$}. Let $\Delta_n$ be the Sheaf Laplacian of $\tm_n$ and $\Delta$ be the Connection Laplacian operator of $\M$. Let $\lambda_{i}^n$ be the $i$-th eigenvalue of $\Delta_n$ and $\Phii^n$ the corresponding eigenvector. Let $\lambda_i$ be the $i$-th eigenvalue of $\Delta$ and $\Phii$  the corresponding eigenvector field of $\Delta$, respectively. \textcolor{black}{Then there exists a sequence of scales $\epsilon_n \rightarrow 0$ as $n \rightarrow \infty$ such that:}
\begin{equation}
\label{eqn:convergence_spectrum}
    \lim_{n\rightarrow \infty } \lambda_i^n = \lambda_i, \quad\lim_{n\rightarrow \infty} \|\Phii^{n} -  \samp_n^{\mathcal{X}}\Phii\|_{\tm_n}=0,
\end{equation}
where the limits are taken in probability.

\noindent\textbf{\textit{Proof.}} This proposition is a consequence of Theorem 6.3 in \cite{chung1997spectral}. Indeed, we rely on the operator introduced \textcolor{black}{in Definition 6.1 of \cite{singer2017spectral}} with $\alpha=1$ ($h_n$ is our $\epsilon_n$), here denoted as $\Xi:\ltm \rightarrow \ltm$, and on the operator $\widetilde{\Xi} = \epsilon_n^{-1}\big(\Xi - \textrm{id}\big)$, where $\textrm{id}$ is the identity mapping. It is straightforward to check:
\begin{equation}
    \label{omega_delta_equiv}
    \widetilde{\Xi}\F(x_j) =  d\i^{-1}\Oj\big(\Delta_n\samp_n^{\mathcal{X}}\F\big)(x_j),
\end{equation}
for $j = 1,\dots,n$. We now show that the eigenvectors sampled on $\mathcal{X}$ and eigenvalues of $\widetilde{\Xi}$  correspond to the eigenvectors and eigenvalues of $\Delta_n$. Let us denote the the $i$-th eigenvector and eigenvalue of $\widetilde{\Xi}$ with $\widetilde{\boldsymbol{\phi}}^n_{i}$ and $-\widetilde{\lambda}^n_{i}$, respectively. We have:
\begin{align}
    \label{eigen_conv}
    \widetilde{\Xi}\eigGammai(x_j) = -\eivGammai\eigGammai(x_j)=d\i^{-1}\Oj\big(\Delta_n\samp_n^{\mathcal{X}}\eigGammai\big)(x_j)
\end{align}
If we apply the mapping $i$ to the last two equalities of \eqref{eigen_conv} and we exploit the orthoghonality of $\Oj$, we obtain:
\begin{align}
\label{eig_gamma_delt}
    \big(\Delta_n\samp_n^{\mathcal{X}}\eigGammai\big)(x_j) = -\eivGammai\Oj^T d\i\eigGammai &= -\eivGammai\samp_n^{\mathcal{X}}\eigGammai(x_j)
\end{align}
where the second equality applies the definition of $\samp_n^{\mathcal{X}}$ in \eqref{sampler}. Therefore, we have:
\begin{align}
\label{eig_gamma_delta}
    \lambda_i^n = \eivGammai, \quad \Phii^n(x_j) = \samp_n^{\mathcal{X}}\eigGammai(x_j),
\end{align}
$j= 1,\dots,n$. At this point, we can recall Theorem 6.3 in \cite{singer2017spectral}, that, in the setting of our Theorem 1, states that \textcolor{black}{there exists a sequence of scales $\epsilon_n \rightarrow 0$ as $n \rightarrow \infty$ such that:}
\begin{equation}
    \label{spect_VDM}
    \lim_{n\rightarrow \infty } \widetilde{\lambda}_i^n = \lambda_i, \quad\lim_{n\rightarrow \infty} \|\eigGammai -  \Phii\|_{\tm}=0,
\end{equation}
with the limit taken in probability, $j= 1,\dots,n$. 
Injecting the empirical measure in \eqref{spect_VDM} and exploiting the results in \eqref{emp_metr_versions} and \eqref{eig_gamma_delta}, we obtain:
\begin{align}
\label{norm_tm}
    &\|\eigGammai-  \Phii\|_{\tm_n} = \|\Phii^n- \samp_n^{\mathcal{X}}\Phii\|_{\tm_n}
\end{align}
The results in \eqref{spect_VDM} and \eqref{norm_tm}
and  the a.s. convergence of the empirical measure $\mu_n$ to the measure $\mu$ conclude the proof.

For the sake of clarity, in the following we will drop the dependence on the NNs output index $u$; from the definitions of TNNs in \eqref{tnn_layer} and D-TNNS in \eqref{dt_tnn_layer}, we can thus write:
 \begin{align}
    \nonumber  \|\bPsi\big(\mathcal{H}, \Delta_n, \samp_n^{\mathcal{X}}\F\big) - \samp_n^{\mathcal{X}}\bPsi\big(\mathcal{H}, \Delta,\F\big)\|&= \left\| \bbx_{n,L}-\samp_n^{\mathcal{X}}\F_L \right\|.
 \end{align}
 Further explicating the layers definitions, at layer $l$ we have: 
 \begin{align}
   \nonumber  &\left\| \bbx_{n,l}- \samp_n^{\mathcal{X}} \F_l \right\|\\
     &=\left\| \sigma\left(\sum_{q=1}^{F_{l-1}} \bbh_l^{q}(\Delta_n) \bbx_{n,l-1}^q \right) -\samp_n^{\mathcal{X}} \sigma\left(\sum_{q=1}^{F_{l-1}} \bbh_l^{q}(\Delta) \F_{l-1}^q\right) \right\|
 \end{align}
 with $\bbx_{n,0}^q=\samp_n^{\mathcal{X}} \F^q$ for $q=1,\dots,F_0$. Exploiting the normalized point-wise Lipschitz continuity of the non-linearities (\textbf{A3}) and the linearity of the sampling operator $\samp_n^{\mathcal{X}}$, we have:
  \begin{align}
  \label{proof_1}
    \| \bbx_{n,l} - \samp_n^{\mathcal{X}} \F_l  \| &\leq \Bigg\|  \sum_{q=1}^{F_{l-1}} \bbh_l^{q}(\Delta_n) \bbx_{n,l-1}^q - \samp_n^{\mathcal{X}} \sum_{q=1}^{F_{l-1}} \bbh_l^{q}(\Delta)  \F_{l-1}^q\Bigg\| \nonumber\\
    & \leq \sum_{q=1}^{F_{l-1}} \left\|    \bbh_l^{q}(\Delta_n) \bbx_{n,l-1}^q    - \samp_n^{\mathcal{X}}   \bbh_l^{q}(\Delta)  \F_{l-1}^q\right\|
 \end{align}
 The difference term in the last LHS of \eqref{proof_1} can be further decomposed for every $q=1,\dots,F_{l-1}$ as
\begin{align}\label{proof_2}
   \nonumber   \|    \bbh_l^{q}(\Delta_n) & \bbx_{n,l-1}^q    - \samp_n^{\mathcal{X}}   \bbh_l^{q}(\Delta)  \F_{l-1}^q \| 
   \\ \nonumber&\leq \|
\bbh_l^{q}(\Delta_n) \bbx_{n,l-1}^q  - \bbh_l^{q}(\Delta_n) \samp_n^{\mathcal{X}} \F_{l-1}^q \\ &\qquad +\bbh_l^{q}(\Delta_n) \samp_n^{\mathcal{X}} \F_{l-1}^q  - \samp_n^{\mathcal{X}}   \bbh_l^{q}(\Delta)  \F_{l-1}^q
    \|\nonumber \\ \nonumber
   & \leq \left\|
    \bbh_l^{q}(\Delta_n) \bbx_{n,l-1}^q  - \bbh_l^{q}(\Delta_n) \samp_n^{\mathcal{X}} \F_{l-1}^q
    \right\|
  \\ &\qquad +
    \left\|
    \bbh_l^{q}(\Delta_n) \samp_n^{\mathcal{X}} \F_{l-1}^q  - \samp_n^{\mathcal{X}}   \bbh_l^{q}(\Delta)  \F_{l-1}^q
    \right\|
\end{align}
The first term of the last inequality in \eqref{proof_2} can be bounded as $\| \bbx_{n,l-1}^q - \samp_n^{\mathcal{X}}\F_{l-1}^q\|$ with the initial condition $\|\bbx_{n,0}^q - \samp_n^{\mathcal{X}} \F_0^q\|=0$ for $q = 1,\dots,F_0$. Denoting the second term with $D_{l-1}^n$,  and iterating the bounds derived above through layers and features, we obtain:
\begin{equation}\label{eqn:separate_layers}
 \nonumber \|\bPsi(\mathcal{H},\Delta_n,\samp_n^{\mathcal{X}} \F) - \samp_n^{\mathcal{X}} \bPsi(\mathcal{H},\Delta,\F)\|
 \leq
 \sum_{l=0}^L \prod\limits_{l'=l}^L F_{l'} D_l^n.
 \end{equation}
 Therefore, we can focus on each difference term $D_l^n$ and omit the feature and layer indices to simplify the notations. 
 We can write the convolution operation in the spectral domain as
 \begin{align}
    & \nonumber\|\bbh(\Delta_n)\samp_n^{\mathcal{X}} \F - \samp_n^{\mathcal{X}}\bbh(\Delta) \F\| \nonumber \\
   & = \Bigg\| \sum_{i=1}^n \hat{h}(\lambda_i^n) \langle \samp_n^{\mathcal{X}}\F,\bphi_i^n \rangle_{\tm_n}\bphi_i^n \nonumber \\
    & \qquad \qquad \qquad \quad- \sum_{i=1}^\infty \hat{h}(\lambda_i)\langle \F,\bphi_i\rangle_{\tm} \samp_n^{\mathcal{X}} \bphi_i  \Bigg\| 
   \label{eqn:conv-1}
 \end{align}
By adding and subtracting $\sum_{i=1}^{n}\hat{h}(\lambda_i) \langle \samp_n^{\mathcal{X}}\F,\bphi_i^n \rangle_{\tm_n}\bphi_i^n$, by coupling the terms with the same index and using the triangle inequality, we can then write
\begin{align}
        & \nonumber {\Bigg\| \sum_{i=1}^n \hat{h}(\lambda_i^n) \langle \samp_n^{\mathcal{X}}\F,\bphi_i^n \rangle_{\tm_n}\bphi_i^n \nonumber - \sum_{i=1}^\infty \hat{h}(\lambda_i)\langle \F,\bphi_i\rangle_{\tm} \samp_n^{\mathcal{X}} \bphi_i  \Bigg\| }
     \\ 
     &\nonumber \leq  \left\| \sum_{i=1}^n \left(\hat{h}(\lambda_i^n)-\hat {h}(\lambda_i) \right) \langle \samp_n^{\mathcal{X}}\F,\bphi_i^n \rangle_{\tm_n}\bphi_i^n \right\| [T1]\\
     &  +\left\| \sum_{i=1}^n \hat{h}(\lambda_i)\left( \langle \samp_n^{\mathcal{X}}\F,\bphi_i^n \rangle_{\tm_n}\bphi_i^n - \langle \F,\bphi_i\rangle_{\tm} \samp_n^{\mathcal{X}} \bphi_i  \right)  \right\| [T2] \nonumber\\
     & + \Bigg\| \sum_{p=n+1}^\infty \hat{h}(\lambda_p)\langle  \F, \bphi_p\rangle_{\ccalT\ccalM}\samp_n^\ccalX \bphi_p \Bigg\| [T3] \label{eqn:decompose}
\end{align}
We now proceed to prove that $[T1]$ converges to zero in probability as $n$ increases. Fixed a $M_{[T1]} \in \mathbb{N}$, we can always rewrite [T1] as
\begin{align}\label{eqn:t1_M}
  [T1] &= \Bigg\| \sum_{i=1}^{\min\{n,M_{[T1]}\}} \left(\hat{h}(\lambda_i^n)-\hat {h}(\lambda_i) \right) \langle \samp_n^{\mathcal{X}}\F,\bphi_i^n \rangle_{\tm_n}\bphi_i^n \nonumber \\
  & + \sum_{i=M_{[T1]}+1}^{n} \left(\hat{h}(\lambda_i^n)-\hat {h}(\lambda_i) \right) \langle \samp_n^{\mathcal{X}}\F,\bphi_i^n \rangle_{\tm_n}\bphi_i^n \Bigg\|
\end{align}
Please notice that,  when $n<M_{[T1]}$, the last sum is an empty sum. By using the triangle inequality, the orthonormality of the $\bphi_i^n$, the Cauchy-Schwartz inequality $|\langle \samp_n^{\mathcal{X}}\F,\bphi_i^n \rangle_{\tm_n}| \leq \|\samp_n^{\mathcal{X}}\F\|$, and the finiteness of $\|\samp_n^{\mathcal{X}}\F\|$, we can further bound the RHS of \eqref{eqn:t1_M}, obtaining
\begin{align}\label{eqn:t1_M_2}
  [T1] &\leq C_{[T1]}\sum_{i=1}^{\min\{n,M_{[T1]}\}} |\hat{h}(\lambda_i^n)-\hat {h}(\lambda_i)| \nonumber \\
  &+ C_{[T1]}\sum_{i=M_{[T1]}+1}^{n} |\hat{h}(\lambda_i^n)-\hat {h}(\lambda_i)|,
\end{align}
for some constant $C_{[T1]}>0$. At this point, by using the fact that $|a-b| \leq |b|$ and the Lipschitz continuity of $\hat{h}(\cdot)$ (\textbf{A1}), we can further bound the RHS of \eqref{eqn:t1_M_2} as
\begin{align}\label{eqn:t1_M_3}
  [T1] &\leq \underbrace{C_{[T1]}\sum_{i=1}^{\min\{n,M_{[T1]}\}} |\lambda_i^n-\lambda_i|}_{[T1.1]} + \underbrace{C_{[T1]}\sum_{i=M_{[T1]}+1}^{\infty} |\hat{h}(\lambda_i)|}_{[T1.2]}
  \end{align}
 It is clear that we can make  $[T1.2]$ in \eqref{eqn:t1_M_3} arbitrarily small by increasing $M_{[T1]}$ since it is the reminder of a convergent series with positive summands ($\textbf{A2}$). Therefore, for all $\gamma_{[T1]}>0$, we can always choose an $M_{[T1]}$ such that $[T1.2]$ is smaller than $\gamma_{[T1]}/2C$. Fixed $M_{[T1]}$, we can further bound $[T1.1]$ using the spectral convergence result in \eqref{eqn:convergence_spectrum}. In particular, using the definition of limit in probability, letting $0<\gamma_i \leq \gamma_{[T1]}/2CM$, for all $\delta_i>0$, there exist $N_i$ such that for all $n \geq N_i$, it holds
 \begin{gather}
 \label{eqn:eigenvalue}   \mathbb{P}(|\lambda_i^n-\lambda_i|\leq \gamma_i)\geq 1-\delta_i.
 \end{gather}
Therefore, for all $\gamma_{[T1]}>0$ and for all $n \geq \max_i N_i$, it holds
 \begin{align}\label{T11_1}
     [T1.1] \leq C_{[T1]}\sum_{i=1}^{\min\{n,M_{[T1]}\}}\gamma_i\leq \gamma_{[T1]}/2
 \end{align}
 with probability at least $\prod_{i =1}^{\min\{n,M_{[T1]}\}}(1-\delta_i) := 1-\delta_{[T1]}$. This allows us to state that for all $\gamma_{[T1]}>0$, for all $\delta_{[T1]} > 0$, there exist an $N_{[T1]}$ such that, for all $n>N_{[T1]}$, we have
  \begin{gather}
 \label{eqn:conv_T1}   \mathbb{P}([T1]\leq \gamma_{[T1]})\geq 1-\delta_{[T1]},
 \end{gather}
i.e. $[T1]$ converges in probability to zero.
We now proceed to show that $[T2]$ in \eqref{eqn:decompose} converges to zero in probability as $n$ increases. By adding and subtracting $\sum_{i =1}^n\hat{h}(\lambda_i) \langle \samp_n^{\mathcal{X}}\F,\bphi_i^n \rangle_{\tm_n}\samp_n^{\mathcal{X}}\bphi_i$, and by using the triangle inequality, we can write \begin{align}\label{eqn:T2_decompose}
   &\nonumber \left\| \sum_{i=1}^n  \hat{h}(\lambda_i)( \langle \samp_n^{\mathcal{X}} \F,\bphi_i^n \rangle_{\tm_n}\bphi_i^n -  \langle \F,\bphi_i \rangle_{\tm} \samp_n^{\mathcal{X}} \bphi_i )\right\|\\
   &\leq \nonumber \Bigg\|  \sum_{i=1}^n \hat{h}(\lambda_i)\Big(\langle \samp_n^{\mathcal{X}} \F,\bphi_i^n\rangle_{\tm_n}\bphi_i^n  \nonumber \\
   &\qquad \qquad \qquad \qquad - \langle \samp_n^{\mathcal{X}} \F,\bphi_i^n \rangle_{\tm_n} \samp_n^{\mathcal{X}}\bphi_i \Big)\Bigg\| [T2.1]\nonumber \\
   &+ \Bigg\| \sum_{i=1}^n  \hat{h} (\lambda_i)\Big(\langle \samp_n^{\mathcal{X}} \F,\bphi_i^n\rangle_{\tm_n} \samp_n^{\mathcal{X}}\bphi_i \nonumber \\
   &\qquad \qquad\qquad \qquad -\langle \F,\bphi_i\rangle_\tm \samp_n^{\mathcal{X}}\bphi_i \Big) \Bigg\| [T2.2]
\end{align}
We can use now the same approach of $[T1]$. In particular, fixed a $M_{[T2.1]} \in \mathbb{N}$, we can always rewrite $[T2.1]$ and then bound it using the triangle inequality as
\begin{align}\label{eqn:t21_M}
  [T2.1] &= \Bigg\| \sum_{i=1}^{\min\{n,M_{[T2.1]}\}} \hat{h}(\lambda_i)\Big(\langle \samp_n^{\mathcal{X}} \F,\bphi_i^n\rangle_{\tm_n}\bphi_i^n  \nonumber \\
   &\qquad \qquad \qquad \qquad - \langle \samp_n^{\mathcal{X}} \F,\bphi_i^n \rangle_{\tm_n} \samp_n^{\mathcal{X}}\bphi_i \Big) \nonumber \\
  & + \sum_{i=M_{[T2.1]}+1}^{n} \hat{h}(\lambda_i)\Big(\langle \samp_n^{\mathcal{X}} \F,\bphi_i^n\rangle_{\tm_n}\bphi_i^n  \nonumber \\
   &\qquad \qquad \qquad \qquad - \langle \samp_n^{\mathcal{X}} \F,\bphi_i^n \rangle_{\tm_n} \samp_n^{\mathcal{X}}\bphi_i \Big)\Bigg\| \nonumber \\
   &\leq \Bigg\| \sum_{i=1}^{\min\{n,M_{[T2.1]}\}} \hat{h}(\lambda_i)\Big(\langle \samp_n^{\mathcal{X}} \F,\bphi_i^n\rangle_{\tm_n}\bphi_i^n  \nonumber \\
   &\qquad \qquad \qquad \qquad - \langle \samp_n^{\mathcal{X}} \F,\bphi_i^n \rangle_{\tm_n} \samp_n^{\mathcal{X}}\bphi_i \Big)\Bigg\| \nonumber \\
   &+ \Bigg \|\sum_{i=M_{[T2.1]}+1}^{n} \hat{h}(\lambda_i)\Big(\langle \samp_n^{\mathcal{X}} \F,\bphi_i^n\rangle_{\tm_n}\bphi_i^n  \nonumber \\
   &\qquad \qquad \qquad \qquad - \langle \samp_n^{\mathcal{X}} \F,\bphi_i^n \rangle_{\tm_n} \samp_n^{\mathcal{X}}\bphi_i \Big)\Bigg\| 
\end{align}
We can now further bound the RHS of \eqref{eqn:t21_M} by using the triangle inequality, the Cauchy-Schwarz inequality $|\langle \samp_n^{\mathcal{X}}\F,\bphi_i^n \rangle_{\tm_n}| \leq \|\samp_n^{\mathcal{X}}\F\|$ , the non-amplifying frequency response (\textbf{A1}, for the first term), the finiteness of $\|\samp_n^{\mathcal{X}}\F\|$, and the finiteness of $\|\bphi_i^n-\samp_n^{\mathcal{X}}\bphi_i\|$ (for the second term) as
\begin{align}\label{eqn:t21_decom}
  [T2.1] &\leq \underbrace{C_{[T2.1]} \sum_{i=1}^{\min\{n,M_{[T2.1]}\}} \|\bphi_i^n-\samp_n^{\mathcal{X}}\bphi_i\|}_{[T2.1.1]} \nonumber \\
  & + \underbrace{C_{[T2.1]} \sum_{i=M_{[T2.1]}+1}^\infty |\hat{h}(\lambda_i)|}_{[T2.1.2]},
\end{align}
for some constant $C_{[T2.1]}>0$. Leveraging the same arguments we used for $[T1.1]$ and $[T1.2]$ in 
\eqref{eqn:t1_M_3} to bound $[T2.1.1]$ and $[T2.1.2]$ in 
\eqref{eqn:t21_decom}, respectively, but using the convergence of the eigenvectors and not of the eigenvalues from \eqref{eqn:convergence_spectrum}, we can state that for all $\gamma_{[T2.1]}>0$, for all $\delta_{[T2.1]} > 0$, there exist an $N_{[T2.1]}$ such that, for all $n>N_{[T2.1]}$, we have
  \begin{gather}
 \label{eqn:conv_T21}   \mathbb{P}([T2.1]\leq \gamma_{[T2.1]})\geq 1-\delta_{[T2.1]},
 \end{gather}
i.e. $[T2.1]$ converges in probability to zero. Following the same procedure we used to obtain the bound in \eqref{eqn:t21_M} for $[T2.1]$, we can obtain the following bound for $[T2.2]$:
\begin{align}\label{eqn:t22_M}
  [T2.2] &\leq \Bigg\| \sum_{i=1}^{\min\{n,M_{[T2.2]}\}}\hat{h} (\lambda_i)\Big(\langle \samp_n^{\mathcal{X}} \F,\bphi_i^n\rangle_{\tm_n} \samp_n^{\mathcal{X}}\bphi_i \nonumber \\
   &\qquad \qquad\qquad \qquad -\langle \F,\bphi_i\rangle_\tm \samp_n^{\mathcal{X}}\bphi_i \Big)\Bigg\| \nonumber \nonumber \\
   &+ \Bigg \|\sum_{i=M_{[T2.2]}+1}^{n} \hat{h} (\lambda_i)\Big(\langle \samp_n^{\mathcal{X}} \F,\bphi_i^n\rangle_{\tm_n} \samp_n^{\mathcal{X}}\bphi_i \nonumber \\
   &\qquad \qquad\qquad \qquad -\langle \F,\bphi_i\rangle_\tm \samp_n^{\mathcal{X}}\bphi_i \Big)\Bigg\| 
\end{align}
We further bound the RHS of \eqref{eqn:t22_M} by using the triangle and Cauchy-Schwarz inequalities, the non-amplifying frequency response (for the first term),  the finiteness of $\|\samp_n^{\mathcal{X}}\F\|$ and $\|\F\|$, and the finiteness of $\|\bphi_i^n-\samp_n^{\mathcal{X}}\bphi_i\|$ (for the second term), as
\begin{align}\label{eqn:t22_decom}
  [T2.2] &\leq \underbrace{C_{[T2.2]} \sum_{i=1}^{\min\{n,M_{[T2.2]}\}} |\langle \samp_n^{\mathcal{X}} \F,\bphi_i^n\rangle_{\tm_n} -\langle \F,\bphi_i\rangle_\tm|}_{[T2.2.1]} \nonumber\\
  & + \underbrace{C_{[T2.2]} \sum_{i=M_{[T2.2]}+1}^\infty |\hat{h}(\lambda_i)|}_{[T2.2.2]},
\end{align}
for some constant $C_{[T2.2]}>0$.  It is trivial, from the weak law of large numbers and from \eqref{emp_metr}-\eqref{emp_metr_versions}, that 
 \begin{equation}\label{eqn:conv_inn_1}
\lim_{n\rightarrow \infty}\left|\langle \samp_n^{\mathcal{X}} \F,\samp_n^{\mathcal{X}}\bphi_i\rangle_{\tm_n}  -\langle \F,\bphi_i \rangle_\tm\right| = 0,
\end{equation}
with the limit taken in probability. By direct substitution and using the distributive law of the dot product, we can write
\begin{align}
    &\left|\langle \samp_n^{\mathcal{X}} \F,\samp_n^{\mathcal{X}}\bphi_i\rangle_{\tm_n} - \langle \samp_n^{\mathcal{X}} \F,\bphi_i^n\rangle_{\tm_n}\right| \nonumber \\
    & = \left| \frac{1}{n}\sum_{i=1}^n \left(\samp_n^{\mathcal{X}} \F(x_i)\dotp\samp_n^{\mathcal{X}}\bphi_i(x_i)-\samp_n^{\mathcal{X}} \F(x_i)\dotp\bphi_i^n(x_i)\right)\right| \nonumber \\
    & = \left|\langle \samp_n^{\mathcal{X}} \F,\samp_n^{\mathcal{X}}\bphi_i-\bphi_i^n\rangle_{\tm_n}\right|  \leq \|\samp_n^{\mathcal{X}} \F\| \|\samp_n^{\mathcal{X}}\bphi_i-\bphi_i^n\|,
\end{align}
where the last inequality is obtained using the Cauchy-Schwartz inequality. Therefore, using again the spectral convergence of eigenvectors from \eqref{eqn:convergence_spectrum}, we can write
\begin{equation}\label{eqn:conv_inn2}
\lim_{n \rightarrow \infty}\left|\langle \samp_n^{\mathcal{X}} \F,\samp_n^{\mathcal{X}}\bphi_i\rangle_{\tm_n} - \langle \samp_n^{\mathcal{X}} \F,\bphi_i^n\rangle_{\tm_n}\right| = 0,
\end{equation}
where the limit is taken in probability.
As a direct consequence of the \eqref{eqn:conv_inn_1} and \eqref{eqn:conv_inn2}, we can directly state that
\begin{equation}\label{eqn:conv_inn_fin}
\lim_{n \rightarrow \infty}\left|\langle \samp_n^{\mathcal{X}} \F,\bphi_i^n\rangle_{\tm_n}-\langle  \F,\bphi_i\rangle_\tm\right| = 0,
\end{equation}
again with the limit in probability. At this point, leveraging the same arguments we used for $[T1.1]$ and $[T1.2]$ in 
\eqref{eqn:t1_M_3} (and for $[T2.1.1]$ and $[T2.1.2]$ in \eqref{eqn:t21_decom})  to bound $[T2.2.1]$ and $[T2.2.2]$ in 
\eqref{eqn:t22_decom}, respectively, but using the convergence of the inner products in \eqref{eqn:conv_inn_fin}, we can state that for all $\gamma_{[T2.2]}>0$, for all $\delta_{[T2.2]} > 0$, there exist an $N_{[T2.2]}$ such that, for all $n>N_{[T2.2]}$:
  \begin{gather}
 \label{eqn:conv_T22}   \mathbb{P}([T2.2]\leq \gamma_{[T2.2]})\geq 1-\delta_{[T2.2]},
 \end{gather}
i.e. $[T2.2]$ converges in probability to zero. As a consequence, we can state that for all $\gamma_{[T2]}>0$, for all $\delta_{[T2]} > 0$, there exist an $N_{[T2]}$ such that, for all $n>N_{[T2]}$, we have
  \begin{gather}
 \label{eqn:conv_T2}   \mathbb{P}([T2]\leq \gamma_{[T2]})\geq 1-\delta_{[T2]},
 \end{gather}
i.e. $[T2]$ converges in probability to zero. We are now missing only the convergence in probability of $[T3]$ from \eqref{eqn:decompose}. However, $[T3]$ is again the reminder of a convergent series with positive summands ($\textbf{A2}$), implying that it deterministically goes to zero as $n$ increases. Therefore, for all $\gamma_{[T3]}>0$, there exist an $N_{[T3]}$ such that, for all $n>N_{[T3]}$, we have
  \begin{gather}
 \label{eqn:conv_T3}   [T3]\leq \gamma_{[T3]}
 \end{gather}

As a direct consequence of
\eqref{eqn:conv_T1}-\eqref{eqn:conv_T2}-\eqref{eqn:conv_T3}, we can state that for all $\gamma > 0$, for all $\delta>0$, there exist a $N$ such that, for all $n>N$, we have
  \begin{gather}
 \label{eqn:conv_T}   \mathbb{P}([T1]+[T2]+[T3]\leq \gamma)\geq 1-\delta,
 \end{gather}
Combining \eqref{eqn:conv_T} with \eqref{eqn:decompose}, we can finally state that
\begin{equation}
    \lim_{n \rightarrow \infty }D_l^n=\lim_{n \rightarrow \infty }\|\bbh(\Delta_n)\samp_n^{\mathcal{X}} \F - \samp_n^{\mathcal{X}}\bbh(\Delta) \F\| = 0,
\label{eqn:final_conv}
\end{equation}
where the limit is taken in probability. The proof is concluded by combining \eqref{eqn:final_conv} and \eqref{eqn:separate_layers}.

\bibliographystyle{IEEEtran}
\bibliography{refs}
\clearpage
\setcounter{page}{1}
\twocolumn[%
   \begin{center}
     {\huge Supplemental Materials }\\
     \end{center}\vspace{0.5cm}
]
\vspace{1cm}
\setcounter{subsection}{0}
\subsection{Sheaf Laplacian Algorithms}\label{ap:algos}
\begin{algorithm}[H]
\footnotesize
   \caption{: Local PCA \cite{singer2012vdm}}    \hspace*{\algorithmicindent} \textbf{Inputs}: \\
    \hspace*{\algorithmicindent} \quad $\mathcal{X} \subset \mathbb{R}^p$: Manifold samples. \vspace{.05cm}\\
    \hspace*{\algorithmicindent} \quad $\epsilon_{\textrm{PCA}} > 0$: Scale parameter \vspace{.05cm}\\
    \hspace*{\algorithmicindent} \quad $K(\cdot) \in C^2(\mathbb{R})$: positive monotonic  supported on $[0, 1]$ \vspace{.05cm}\\
    \hspace*{\algorithmicindent} \textbf{Outputs}: \vspace{.05cm}\\
    \hspace*{\algorithmicindent} \quad $\{\mathbf{O}_i\}_{x_i \in \mathcal{X}}$: Orthogonal transformation
    \begin{algorithmic}[1]
        \Function{LOCAL PCA\,} {\textbf{Inputs}}
                \For{$x_i \in \mathcal{X}$}
                    \State Compute $\mathcal{N}^{\textrm{P}}_i = \{x_j: 0 < \|x_i - x_j\|_{\mathbb{R}^p}\leq \sqrt{\epsilon_{\textrm{PCA}}}\}$\vspace{.05cm}
                    \State Compute $\mathbf{X}_i = [\dots, x_i - x_j, \dots]$, $x_j \in \mathcal{N}^{\textrm{P}}_i$ \vspace{.05cm}
                    \State Compute $\mathbf{C}_i$ with $[\mathbf{C}_i]_{j,j}=\sqrt{K\big(\frac{||x_i -x_j||}{\sqrt{\epsilon_{\textrm{PCA}}}}\big)}$ \vspace{.05cm}
                    \State Compute $\mathbf{B}_i=\mathbf{X}_i\mathbf{C}_i$ and $\mathbf{R}_i=\mathbf{B}_i^T\mathbf{B}_i$  \vspace{.05cm}
                    \State Eigendecompose $\mathbf{R}_i=\mathbf{M}_i\Sigma_i\mathbf{J}_i^T$ \vspace{.05cm}
                \EndFor
        \State \textbf{end}
        \State Compute $\hd$ as in \cite{singer2012vdm} \vspace{.05cm}
        \State Set $\mathbf{O}_i$ to be the first $\hd$ columns of $\mathbf{M}_i$ \vspace{.05cm}\\
        \Return: \\
        \hspace*{\algorithmicindent} \quad $\{\mathbf{O}_i\}_{x_i \in \mathcal{X}}$ 
       \EndFunction
\end{algorithmic}
\end{algorithm}\label{algo:locpca}

\begin{algorithm}[H]
\footnotesize
   \caption{: Sheaf Laplacian \cite{singer2012vdm}}   \hspace*{\algorithmicindent} \textbf{Inputs}: \\
    \hspace*{\algorithmicindent} \quad $\mathcal{X} \subset \mathbb{R}^p$: Manifold samples.  \vspace{.05cm}\\
    \hspace*{\algorithmicindent} \quad $\epsilon_n > 0$: Scale parameter for geometric graph \vspace{.05cm}\\
    \hspace*{\algorithmicindent} \quad $\epsilon_{\textrm{PCA}} > 0$: Scale parameter for local PCA \vspace{.05cm}\\
    \hspace*{\algorithmicindent} \quad $K(\cdot) \in C^2(\mathbb{R})$: positive monotonic  supported on $[0, 1]$\\
    \hspace*{\algorithmicindent} \textbf{Outputs}: \vspace{.05cm}\\
    \hspace*{\algorithmicindent} \quad $\{\mathbf{O}_i\}_{x_i \in \mathcal{X}}$: Orthogonal transformation \vspace{.05cm}\\
    \hspace*{\algorithmicindent} \quad $\Delta_n$: Normalized Sheaf Laplacian
    \begin{algorithmic}[1]
        \Function{SHEAF LAPLACIAN\,} {\textbf{Inputs}}
        \State Compute graph $\mathcal{M}_n$ with edge weights as in \eqref{graph_weights}
                \For{$x_i \in \mathcal{X}$}
                    \State Compute $\{\mathbf{O}_i\}$ with Algorithm 2
                \EndFor
                \State \textbf{end}
                \For{$x_i \in \mathcal{X}$}
                     \For{$x_j \in \mathcal{N}^{\textrm{P}}_i$}
                        \State Compute $\widetilde{\mathbf{O}}_{i,j}=\mathbf{O}_i^T\mathbf{O}_j\svdeq \mathbf{M}_{i,j}\Sigma_i\mathbf{V}_{i,j}^T$ \vspace{.05cm}
                        \State Compute $\mathbf{O}_{i,j}=\mathbf{M}_{i,j}\mathbf{V}_{i,j}^T$ \vspace{.05cm}
                        \State Compute $\textrm{deg}(i)=\sum_j w_{i,j}$ \vspace{.05cm}
                        \State Compute $\textrm{ndeg}(i)=\sum_j \frac{w_{i,j}}{\textrm{deg}(i)\textrm{deg}(j)}$ \vspace{.05cm}
                        \State Compute $\widetilde{\mathbf{D}}_i=\textrm{deg}(i)\mathbf{I}_{\hd}$ and  $\mathbf{D}_i=\textrm{ndeg}(i)\mathbf{I}_{\hd}$ \vspace{.05cm}
                        \State Compute $\mathbf{S}_{i,j}=w_{i,j}\widetilde{\mathbf{D}}^{-1}_i\mathbf{O}_{i,j}\widetilde{\mathbf{D}}^{-1}_i$ \vspace{.05cm}
                    \EndFor
                    \State \textbf{end}
                \EndFor
        \State \textbf{end}
        \State Compute block matrix $\mathbf{S}$ with $\mathbf{S}_{i,j}$s as blocks \vspace{.05cm}
        \State Compute block diagonal matrix $\mathbf{D}$ with $\mathbf{D}_{i,i}$ as blocks 
        \State Compute $\Delta_n = \epsilon_n^{-1}\big(\D^{-1}\S - \mathbf{I}\big)$ \vspace{.05cm}\\
        \Return: \\
        \hspace*{\algorithmicindent} \quad $\{\mathbf{O}_i\}_{x_i \in \mathcal{X}}$, $\Delta_n$
       \EndFunction
\end{algorithmic}
\end{algorithm}\label{algo:sheaflap}
\newpage
\subsection{Proof of Proposition 1} \label{ap:prop1}
\begin{proof}[Proof of Proposition  \ref{prop:parametric-filter}]
By definition of frequency representation in \eqref{freq_resp} we have:
\begin{align}
\label{g_freq}
    &\big[\hat{G}\big]_i = \langle \G, \Phii \rangle = \int_{\M}\langle \G(x), \Phii(x) \rangle_{\tmx} \textrm{d}\mu(x)
\end{align}
Injecting \eqref{param_conv} in \eqref{g_freq}, we get:
\begin{align}
\label{g_freq_f}
    &\big[\hat{G}\big]_i = \langle \int_0^{\infty}\th(t)e^{t\Delta}\F(x)\textrm{d}t, \Phii \rangle 
\end{align}
For the linearity of integrals and inner products, we can write:
\begin{align}
\label{g_freq_int_out}
    &\big[\hat{G}\big]_i = \int_0^{\infty}\th(t)\langle e^{t\Delta}\F(x), \Phii \rangle \textrm{d}t
\end{align}
Finally, exploiting first the self-adjointness of $\Delta$ and then the eigenvector fields definition in \eqref{eigen}, we can write:
\begin{align}
\label{g_freq_final}
    \big[\hat{G}\big]_i &= \int_0^{\infty}\th(t)\langle e^{t\Delta}\F(x), \Phii \rangle \textrm{d}t \nonumber\\
    &=\int_0^{\infty}\th(t)\langle \F(x), e^{t\Delta}\Phii \rangle \textrm{d}t \nonumber \\  
    &=\int_0^{\infty}\th(t)\langle \F(x), e^{-t\lambda_i}\Phii \rangle \textrm{d}t \nonumber \\
    &=\int_0^{\infty}\th(t)e^{-t\lambda_i}\langle \F(x), \Phii \rangle \textrm{d}t,
\end{align}
which concludes the proof.
\end{proof}
\end{document}